\title{A Hybrid Linear Logic for Constrained Transition Systems}
\titlerunning{A Hybrid Linear Logic for Constrained Transition Systems}
\author[1]{Jo\"elle Despeyroux}
\author[2]{Kaustuv Chaudhuri}
\affil[1]{INRIA and CNRS, I3S, Sophia-Antipolis, France\\
  \texttt{joelle.despeyroux@inria.fr}}
\affil[2]{INRIA, France\\\texttt{kaustuv.chaudhuri@inria.fr}}
\authorrunning{J. Despeyroux and K. Chaudhuri}
\subjclass{
F.4.1. Mathematical Logic: Modal and Temporal Logics;
F.1.2. Modes of Computation: Parallelism and Concurrency
}
\keywords{linear logic, hybrid logic, stochastic pi-calculus, focusing, adequacy}
\begin{document}

\maketitle

\begin{abstract}
  Linear implication can represent state transitions, but real transition
  systems operate under temporal, stochastic or probabilistic constraints that
  are not directly representable in ordinary linear logic.
  We propose a general modal extension of intuitionistic linear logic where
  logical truth is indexed by constraints and hybrid connectives combine
  constraint reasoning with logical reasoning.%
  The logic has a focused cut-free sequent calculus that can be used to
  internalize the rules of particular constrained transition systems; we
  illustrate this with an adequate encoding of the synchronous stochastic
  pi-calculus.
\end{abstract}

\section{Introduction}

To reason about state transition systems, we need a logic of state.
Linear logic~\cite{girard87tcs} is such a logic and has been successfully used
to model such diverse systems as process calculi~\cite{miller92welp}, references
and concurrency in programming languages~\cite{wadler90ifiptc2}, and formal
security~\cite{caires10concur,cervesato00entcs}, to give a few examples.
Linear logic achieves this versatility by representing propositions as
\emph{resources} that are combined using "tens", which can then be transformed
using the linear implication ("-o").
However, linear implication is timeless: there is no way to correlate two
concurrent transitions.
If resources have lifetimes and state changes have temporal, probabilistic or
stochastic \emph{constraints}, then the logic will allow inferences that may not
be realizable in the system being modelled.
The need for formal reasoning in such constrained systems has led to the
creation of specialized formalisms such as Computation Tree Logic
(\proofsystem{CTL})\cite{Emerson95}, Continuous Stochastic Logic
(\proofsystem{CSL})~\cite{aziz00tcl} or Probabilistic CTL
(\proofsystem{PCTL})~\cite{hansson94fac}.
These approaches pay a considerable encoding overhead for the states and
transitions in exchange for the constraint reasoning not provided by linear logic.
A prominent alternative to the logical approach is to use a suitably enriched
process algebra such as the stochastic and probabilistic $\pi$-calculi or the
$\kappa$-calculus~\cite{danos04tcs}.
Processes are animated by means of simulation and then compared with the
observations.
Process calculi do not however completely fill the need for {\em formal
  reasoning for constrained transition systems}.

We propose a simple yet general method to add constraint reasoning to linear logic.
It is an old idea---\emph{labelled deduction}~\cite{simpson94phd} with
\emph{hybrid} connectives~\cite{brauener06jal}---applied to a new domain.
Precisely, we parameterize ordinary logical truth on a \emph{constraint domain}:
"A @ w" stands for the truth of "A" under constraint "w".
Only a basic monoidal structure is assumed about the constraints from a
proof-theoretic standpoint.
We then use the hybrid connectives of \emph{satisfaction} ("at") and
\emph{localization} ("dn") to perform generic symbolic reasoning on the
constraints at the propositional level.
We call the result \emph{hybrid linear logic} (\hyll); it has a generic cut-free
(but cut admitting) sequent calculus that can be strengthened with a focusing
restriction~\cite{andreoli92jlc} to obtain a normal form for proofs.
Any instance of \hyll that gives a semantic interpretation to the constraints
enjoys these proof-theoretic properties.

Focusing allows us to treat \hyll as a \emph{logical framework} for constrained
transition systems.
Logical frameworks with hybrid connectives have been considered before; hybrid
\LF (\HLF), for example, is a generic mechanism to add many different kinds of
resource-awareness, including linearity, to ordinary \LF~\cite{reed06hylo}.
\HLF follows the usual \LF methodology of keeping the logic of the framework
minimal: its proof objects are $\beta$-normal $\eta$-long natural deduction
terms, but the equational theory of such terms is sensitive to permutative
equivalences~\cite{watkins03tr}.
With a focused sequent calculus, we have more direct access to a canonical
representation of proofs, so we can enrich the framework with any connectives
that obey the focusing discipline.
The representational adequacy of an encoding in terms of (partial) focused
sequent derivations tends to be more straightforward than in a natural deduction
formulation.
We illustrate this by encoding the synchronous stochastic $\pi$-calculus (\Spi)
in \hyll using rate functions as constraints.

In addition to the novel stochastic component, our encoding of \Spi is a
conceptual improvement over other encodings of $\pi$-calculi in linear logic.
In particular, we perform a full propositional reflection of processes as
in~\cite{miller92welp}, but our encoding is first-order and adequate as
in~\cite{cervesato03tr}.
\hyll does not itself prescribe an operational semantics for the encoding of
processes; thus, bisimilarity in continuous time Markov chains (\CTMC) is not
the same as logical equivalence in stochastic \hyll, unlike in
\proofsystem{CSL}~\cite{desharmais03jlap}.
This is not a deficiency; rather, the \emph{combination} of focused \hyll proofs
and a proof search strategy tailored to a particular encoding is necessary to
produce faithful symbolic executions.
This exactly mirrors \Spi where it is the simulation rather than the transitions
in the process calculus that is shown to be faithful to the \CTMC
semantics~\cite{phillips04cmmb}.

This work has the following main contributions.
First is the logic \hyll itself and its associated proof-theory, which has a
very standard and well understood design.
%
Second, we show how to obtain many different instances of \hyll for particular
constraint domains because we only assume a basic monoidal structure for
constraints.
Third, we illustrate the use of focused sequent derivations to obtain adequate
encodings by giving a novel adequate encoding of \Spi.
Our encoding is, in fact, \emph{fully adequate}, \ie, partial focused proofs are
in bijection with traces.
The ability to encode \Spi gives an indication of the versatility of \hyll.

This paper is organized as follows: in \secref{hyll}, we present the inference
(sequent calculus) rules for \hyll and describe the two main semantic instances:
temporal and probabilistic constraints.
In \secref{focusing} we sketch the general focusing restriction on \hyll sequent
proofs.
In \secref{spi} we give the encoding of \spi in probabilistic \hyll, and show that
the encoding is representationally adequate for focused proofs (theorems
\ref{thm:completeness} and \ref{thm:adeq}).
We end with an overview of related (\secref{related}) and future work
(\secref{concl}).
The full version of this paper is available as a technical
report~\cite{Chaudhuri-Despeyroux:13tr}.

\section{Hybrid Linear Logic}
\label{sec:hyll}

In this section we define \hyll, a conservative extension of intuitionistic
first-order linear logic (\ill)~\cite{girard87tcs} where the truth judgements
are labelled by worlds representing constraints. Like in \ill, propositions are
interpreted as \emph{resources} which may be composed into a \emph{state} using
the usual linear connectives, and the linear implication ("-o") denotes a
transition between states. The world label of a judgement represents a
constraint on states and state transitions; particular choices for the worlds
produce particular instances of \hyll. The common component in all the instances
of \hyll is the proof theory, which we fix once and for all. We impose the
following minimal requirement on the kinds of constraints that \hyll can deal
with.

\begin{definition} \label{defn:constraint-domain}
  A \emph{constraint domain} "\cal W" is a monoid structure
  "\langle W, ., rid\rangle". The elements of "W" are called \emph{worlds}, and
  the partial order "\preceq\ : W \times W"---defined as "u \preceq w" if there
  exists "v \in W" such that "u . v = w"---is the \emph{reachability relation}
  in "\cal W".
\end{definition}

\noindent
The identity world "rid" is "\preceq"-initial and is intended to represent the
lack of any constraints. Thus, the ordinary \ill is embeddable into any instance
of \hyll by setting all world labels to the identity. When needed to
disambiguate, the instance of \hyll for the constraint domain "\cal W" will be
written \chyll[W].
Two design choices are important to note.
First, we only require the worlds to be monoids, not lattices, because this lets
us give a sufficiently general system that it can be instantiated with rate
functions as the constraint domain.
Second, we do not assume that the monoid is commutative so that we can still
choose to use lattices for the constraint domain.

Atomic propositions are written using lowercase letters ("a, b, ...") applied to
a sequence of \emph{terms} ("s, t, \ldots"), which are drawn from an untyped
term language containing term variables ("x, y, \ldots") and function symbols
("f, g, ...") applied to a list of terms. Non-atomic propositions are
constructed from the connectives of first-order intuitionistic linear logic and
the two hybrid connectives \emph{satisfaction} ("at"), which states that a
proposition is true at a given world ("w, u, v, \ldots"), and
\emph{localization} ("dn"), which binds a name for the (current) world the proposition is
true at. The following grammar summarizes the syntax of \hyll propositions.

\smallskip
\bgroup
\begin{tabular}{l@{\ }r@{\ }l}
  "A, B, ..." & "::=" & "a~\vec t\ OR A tens B OR one OR A -o B OR A with B OR top OR A plus B OR zero OR ! A OR all x. A OR ex x. A" \\
              & "|\ " & "(A at w) OR now u. A OR all u. A OR ex u. A" \\
\end{tabular}
\egroup

\smallskip
\noindent
Note that in the propositions "now u. A", "all u. A" and "ex u. A",
world $u$ is bound in $A$.
World variables cannot be used in terms, and neither can term variables occur in worlds;
this restriction is important for the modular design of \hyll because it keeps purely
logical truth separate from constraint truth.  We let "\alpha" range over variables of either kind.
%
As we shall prove later (Theorem \ref{thm:invertibility}),
the "dn" connective commutes with every propositional connective, including itself. That
is, "now u. (A * B)" is equivalent to "(now u. A) * (now u. B)" for all binary connectives "*", and
"now u. * A" is equivalent to "* (now u. A)" for every unary connective "*", assuming the
commutation will not cause an unsound capture of "u". It is purely a matter of taste where to place
the "dn", and repetitions are harmless.

The unrestricted connectives "and", "or", "imp", \etc of intuitionistic
(non-linear) logic can also be defined in terms of the linear connectives and
the exponential "!"  using any of the available embeddings of intuitionistic
logic into linear logic, such as Girard's embedding~\cite{girard87tcs}.
For example,   "A imp B" can be defined as "! A -o B".

\subsection{Sequent Calculus for \hyll}

In this section, we give a sequent calculus presentation of \hyll and prove a
cut-admissibility theorem.  The sequent formulation in turn will lead to an
analysis of the polarities of the connectives in order to get a focused sequent
calculus that can be used to compile a logical theory into a system of derived
inference rules with nice properties (\secref{focusing}).  For instance, if a
given theory defines a transition system, then the derived rules of the focused
calculus will exactly exhibit the same transitions. This is key to obtain the
necessary representational adequacy theorems,
as we shall see for the \spi-calculus example chosen in this paper (\secref{spi.adq}).

We start with the judgements from linear logic~\cite{girard87tcs} and enrich
them with a modal situated truth. We present the syntax of hybrid linear logic
in a sequent calculus style, using Martin-L\"{o}f's principle of separating
judgements and logical connectives.
Instead of the ordinary mathematical judgement ``"A" is true'', for a proposition $A$,
judgements of \hyll are of the form ``"A" is true at world "w"'', abbreviated as "A @ w".
We use sequents of
the form "\G ; \D \ ==> \ C @ w" where "\G" and "\D" are sets of judgements of
the form "A @ w", with "\D" being moreover a \emph{multiset}.  "\G" is called
the \emph{unrestricted context}: its hypotheses can be consumed any number of
times.  "\D" is a \emph{linear context}: every hypothesis in it must be consumed
singly in the proof.
Note that in a judgement "A @ w" (as in a proposition "A at w"), $w$ can be
any expression in $\cal W$, not only a variable.
The notation $ A [ \tau / \alpha ]$ stands for the replacement of all free occurrences
of the variable $\alpha$ in $A$ with the expression $\tau$, avoiding capture.
Note that the expressions in the rules are to be read up to alpha-conversion.

\begin{figure}[p]
\framebox{
\begin{minipage} {0.9\textwidth}
  \setlength{\parindent}{0pt}
   \bgroup \small
   {\bf \large Judgemental rules}
   \begin{gather*}
     \I[init]{"\G ; a\ \vec t\ @ u ==> a\ \vec t\ @ u"}{}
     \SP
     \I[copy]{"\G, A @ u ; \D ==> C @ w"}{"\G, A @ u ; \D, A @ u ==> C @ w"}
   \end{gather*}

   {\bf \large Multiplicatives}
   \begin{gather*}
     \I["{tens}R"]{"\G ; \D, \D' ==> A tens B @ w"}{"\G ; \D ==> A @ w" & "\G ; \D' ==> B @ w"}
     \SP
     \I["{tens}L"]{"\G ; \D, A tens B @ u ==> C @ w"}
     {"\G ; \D, A @ u, B @ u ==> C @ w"}
     \\[1ex]
     \I["{one}R"]{"\G ; . ==> one @ w"}
     \SP
     \I["{one}L"]{"\G ; \D, one @ u ==> C @ w"}{"\G ; \D ==> C @ w"}
     \\[1ex]
     \I["{-o}R"]{"\G ; \D ==> A -o B @ w"}{"\G ; \D, A @ w ==> B @ w"}
     \SP
     \I["{-o}L"]{"\G ; \D, \D', A -o B @ u ==> C @ w"}
     {"\G ; \D ==> A @ u" & "\G ; \D', B @ u ==> C @ w"}
   \end{gather*}

   {\bf \large Additives}
   \begin{gather*}
     \I["top R"]{"\G ; \D ==> top @ w"}
     \LSP
     \I["zero L"]{"\G ; \D, zero @ u ==> C @ w"}
     \\[1ex]
     \I["{with}R"]{"\G ; \D ==> A with B @ w"}{"\G ; \D ==> A @ w" & "\G ; \D ==> B @ w"}
     \LSP
     \I["{with}L_i"]{"\G ; \D, A_1 with A_2 @ u ==> C @ w"}
     {"\G ; \D, A_i @ u ==> C @ w"}
     \\[1ex]
     \I["{plus}R_i"]{"\G ; \D ==> A_1 plus A_2 @ w"}{"\G ; \D ==> A_i @ w"}
     \LSP
     \I["{plus}L"]{"\G ; \D, A plus B @ u ==> C @ w"}
     {"\G ; \D, A @ u ==> C @ w" & "\G ; \D, B @ u ==> C @ w"}
   \end{gather*}

   {\bf \large Quantifiers}
   \begin{gather*}
     \I["\forall R^\alpha"]{"\G ; \D ==> \fall \alpha A @ w"}{"\G ; \D ==> A @ w"}
     \SP
     \I["\forall L"]{"\G ; \D, \fall \alpha A @ u ==> C @ w"}
     {"\G ; \D, [\tau / \alpha] A @ u ==> C @ w"}
     \\[1ex]
     \I["\exists R"]{"\G ; \D ==> \fex \alpha A @ w"}{"\G ; \D ==> [\tau / \alpha] A @ w"}
     \SP
     \I["\exists L^\alpha"]{"\G ; \D, \fex \alpha A @ u ==> C @ w"}
     {"\G ; \D, A @ u ==> C @ w"}
   \end{gather*}

   For "\forall R^\alpha" and "\exists L^\alpha", "\alpha" is assumed to
   be fresh with respect to the conclusion. For "\exists R" and "\forall
   L", "\tau" stands for a term or world, as appropriate.

   {\bf \large Exponentials}
   \begin{gather*}
     \I["{!}R"]{"\G ; . ==> {! A} @ w"}{"\G ; . ==> A @ w"}
     \SP
     \I["{!}L"]{"\G ; \D, {! A} @ u ==> C @ w"}{"\G, A @ u ; \D ==> C @ w"}
   \end{gather*}

   {\bf \large Hybrid connectives}
   \begin{gather*}
     \I["at R"]{"\G ; \D ==> (A at u) @ v"}{"\G ; \D ==> A @ u"}
     \LSP
     \I["at L"]{"\G ; \D, (A at u) @ v ==> C @ w"}{"\G ; \D, A @ u ==> C @ w"}
     \\[1ex]
     \I["{dn}R"]{"\G ; \D ==> now u. A @ w"}{"\G ; \D ==> [w/u] A @ w"}
     \LSP
     \I["{dn}L"]{"\G ; \D, now u. A @ v ==> C @ w"}{"\G ; \D, [v/u] A @ v ==> C @ w"}
   \end{gather*}
   \egroup
\end{minipage}}
\caption{The sequent calculus for \hyll}.
\label{fig:seq-rules}
\end{figure}

The full collection of rules of the \hyll sequent calculus is in
\figref{seq-rules}.
The rules for the linear connectives are borrowed from \cite{chaudhuri03tr}
where they are discussed at length, so we omit a more thorough discussion here.
The rules for the first-order quantifiers are completely standard.
A brief discussion of the hybrid rules follows.
To introduce the \emph{satisfaction}
proposition "(A at u)" (at any world "v'") on the right, the proposition "A" must be true in the
world "u". The proposition "(A at u)" itself is then true at any world, not just
in the world "u". In other words, "(A at u)" carries with it the world at which
it is true. Therefore, suppose we know that "(A at u)" is true (at any world "v");
then, we also know that "A @ u".
The other hybrid connective of \emph{localisation}, "dn", is intended to be
able to name the current world. That is, if "now u. A" is true at world "w",
then the variable "u" stands for "w" in the body "A". This interpretation is
reflected in its introduction rule on the right "{dn} R". For left introduction,
suppose we have a proof of "now u. A @ v" for some world "v".
Then, we also know "[v / u] A @ v".

There are only two structural rules: the init rule infers an
atomic initial sequent, and the copy rule introduces a contracted copy of an
unrestricted assumption into the linear context (reading from conclusion to
premise). Weakening and contraction are admissible rules:

\begin{theorem}[structural properties] \mbox{}
  \begin{ecom}
  \item If "\G ; \D ==> C @ {w}", then "\G, \G' ; \D ==> C @ {w}". (weakening)
  \item If "\G, A @ u, A @ u ; \D ==> C @ {w}", then "\G, A @ u ; \D ==> C @ {w}". (contraction)
  \end{ecom}
\end{theorem}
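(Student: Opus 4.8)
The plan is to prove both structural properties by induction on the structure of the given derivation, proceeding by case analysis on the last rule applied. This is the standard approach for such admissibility results in a sequent calculus, and both parts follow the same rhythm: in each case, we apply the inductive hypothesis to the premise(s) of the last rule and then reassemble the same rule to obtain the desired conclusion.

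For weakening, I would proceed by induction on the derivation of $\G ; \D \Longrightarrow C @ w$, showing that the extra unrestricted hypotheses in $\G'$ can be carried along harmlessly. In the base case \textbf{init}, the conclusion holds trivially since the unrestricted context is unconstrained in that rule, so adding $\G'$ preserves validity. For each logical rule, the unrestricted context $\G$ is either propagated unchanged to all premises (as in every rule except \textbf{copy} and the exponential left rules) or augmented; in every case I apply the inductive hypothesis to each premise with the same $\G'$ and reapply the rule. The cases that merit a moment's care are \textbf{copy}, where the principal formula $A @ u$ lives in $\G$ and I must ensure it is still available after weakening (it is, since $\G \subseteq \G, \G'$), and $!L$, which moves a formula into the unrestricted context; here weakening composes cleanly because $(\G, A @ u), \G' = (\G, \G'), A @ u$ as sets. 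The side conditions on the quantifier rules $\forall R^\alpha$ and $\exists L^\alpha$ require $\alpha$ fresh for the conclusion, and I would note that we may rename $\alpha$ to avoid any clash with the free variables of $\G'$, so freshness is maintained.

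For contraction, I would again induct on the derivation, this time of $\G, A @ u, A @ u ; \D \Longrightarrow C @ w$, merging the two copies of $A @ u$ into one. Most cases are again a direct appeal to the inductive hypothesis followed by reapplication of the last rule, since the unrestricted context is shared identically across premises. The genuinely interesting case is \textbf{copy}, where one of the two duplicated hypotheses may be the one being copied into the linear context: after contracting in the premise via the inductive hypothesis, the single remaining copy in $\G, A @ u$ still licenses the \textbf{copy} step, so the derivation goes through. I would also attend to $!L$, where a newly added unrestricted formula may coincide with $A$; the multiset-versus-set bookkeeping of the contexts resolves this, since $\G$ is a set and duplicate unrestricted assumptions are automatically identified.

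The main obstacle is not any single hard case but the discipline of handling the two context zones correctly: because $\G$ is a set of judgements $B @ v$ while $\D$ is a multiset, I must be careful that contraction on the unrestricted side is essentially set-idempotence, whereas the linear side is never contracted (and indeed linear contraction would be unsound). The hybrid and quantifier rules introduce no new difficulty for structural reasoning, since the world labels and substitutions $[\tau/\alpha]$, $[w/u]$, $[v/u]$ act only on the \emph{principal} formula and its residue, leaving the ambient contexts $\G$ and $\G'$ untouched; the only vigilance required is the usual capture-avoidance and freshness renaming already flagged for the quantifier rules.
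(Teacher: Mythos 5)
Your proposal is correct and follows exactly the route the paper takes: the paper's proof is stated simply as ``by straightforward structural induction on the given derivations,'' which is precisely the case-by-case induction you describe. The points you flag (the \textbf{copy} and $!L$ cases, and freshness renaming for the quantifier rules) are the right ones to check, and none of them causes difficulty.
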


\begin{proof}
  By straightforward structural induction on the given derivations.
\end{proof}

The most important structural properties are the admissibility of the identity
and the cut principles. The identity theorem is the general case of the init
rule and serves as a global syntactic completeness theorem for the
logic. Dually, the cut theorem below establishes the syntactic soundness of the
calculus; moreover there is no cut-free derivation of ". ; . ==> zero @ w", so
the logic is also globally consistent.

\begin{theorem}[identity]
  "\G ; A @ w ==> A @ w".
\end{theorem}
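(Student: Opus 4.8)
The plan is to prove the identity theorem by structural induction on the proposition $A$. This is the standard "expansion" of the atomic $\init$ rule into a full identity derivation, showing that $\G ; A @ w \Longrightarrow A @ w$ holds for an arbitrary proposition $A$ rather than just an atom. The base case is exactly the $\init$ rule when $A$ is atomic; all other cases are handled by applying the left rule and the right rule for the outermost connective of $A$ and appealing to the induction hypothesis on the immediate subformulas.

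First I would set up the induction and dispose of the base case $A = a\ \vec t$, which is immediate by $\init$. For the inductive step I would proceed connective by connective. For each binary or unary logical connective, the schema is the same: apply the left rule to the hypothesis $A @ w$ and the right rule to the goal $A @ w$, reducing to identity sequents on strictly smaller subformulas, which are supplied by the induction hypothesis. For example, for $A \tens B$ I would apply $\tens L$ to split the hypothesis into $A @ w, B @ w$, then apply $\tens R$ (splitting the linear context so that $A @ w$ proves the left conjunct and $B @ w$ the right), and close each branch by the induction hypothesis on $A$ and on $B$. The additive, exponential, and unit cases follow the same pattern; the $\oth!$ case in particular uses $\oth! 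L$ to move $A @ w$ into the unrestricted context, then $\oth! R$, then $\cpy$ to retrieve a copy before appealing to the induction hypothesis.

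The hybrid and quantifier cases are where the world-and-variable bookkeeping matters, so I would treat them explicitly. For $\all \alpha. A$ I would apply $\forall L$ (instantiating with a fresh variable $\alpha$) over $\forall R^{\alpha}$, and dually $\exists R$ over $\exists L^{\alpha}$, taking care that the eigenvariable freshness side condition on $\forall R^{\alpha}$/$\exists L^{\alpha}$ is satisfied and that $\tau$ is chosen to be exactly the eigenvariable so that the two substitutions cancel. The genuinely hybrid connectives are $\at$ and $\dn$: for $(A \at u) @ w$ I would apply $\at L$ and $\at R$, both of which shift the active world to $u$, reducing the goal to the identity sequent $\G ; A @ u \Longrightarrow A @ u$ covered by the induction hypothesis. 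For $\dn u. A @ w$ the two rules $\dn L$ and $\dn R$ both substitute the ambient world $w$ for the bound $u$, reducing the goal to the identity $\G ; [w/u]A @ w \Longrightarrow [w/u]A @ w$; here I would note that $[w/u]A$ is no larger than $A$ as an induction measure (substitution of a world does not increase the connective structure), so the induction hypothesis still applies.

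The main obstacle is ensuring the induction measure is well-founded across the substitutions introduced by the quantifier and $\dn$ rules: a naive induction on the \emph{size} of $A$ could in principle be threatened by substitution, so I would either induct on the structure of $A$ with the observation that term/world substitution preserves the connective skeleton, or more cleanly induct on the number of logical connectives in $A$, which is manifestly invariant under substitution of a world or a term for a variable. With that measure fixed the remaining reasoning is entirely routine rule-matching, and no appeal to cut is needed since every case is proved by rules alone.
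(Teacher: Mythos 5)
Your proposal is correct and follows exactly the route the paper takes: structural induction on the proposition, with the atomic case given by the initial rule and each connective handled by pairing its left and right rules, the quantifier and localization cases justified by noting that substituting a term or world for a variable preserves the connective skeleton. The paper states only "by induction on the structure of the proposition" and defers details to the technical report, so your write-up is simply a fuller account of the same argument.
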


\begin{proof}
  By induction on the structure of "A" (see \cite{Chaudhuri-Despeyroux:13tr}).
\end{proof}

\bgroup
\begin{theorem}[cut] \mbox{} \label{thm:cut}
  \begin{ecom}[1.]
  \item If "\G ; \D ==> A @ u" and "\G ; \D', A @ u ==> C @ w", then "\G ; \D,
    \D' ==> C @ w".
  \item If "\G ; . ==> A @ u" and "\G, A @ u ; \D ==> C @ w", then "\G ; \D
    ==> C @ w".
  \end{ecom}
\end{theorem}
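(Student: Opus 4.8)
The plan is to prove both cut statements simultaneously by a nested (lexicographic) induction, following the standard Pfenning-style structural cut-elimination argument adapted to the hybrid setting. The induction measure is the usual triple: first on the structure of the cut formula $A$, then on the heights of the two given derivations. I would prove statements~1 (linear cut) and~2 (unrestricted cut) by mutual induction, where statement~2 appeals to statement~1 at the same cut formula, and both may appeal to themselves at smaller derivation heights or to either statement at a structurally smaller cut formula. The copy rule interacts with statement~2, so these two must be developed together rather than sequentially.

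The case analysis proceeds by the type of the last rule in each premise derivation. First I would handle the \emph{principal cases}, where the cut formula $A$ was just introduced on the right in the first derivation and on the left in the second: here the connective-specific inversion produces cuts on the immediate subformulas of $A$, which are justified by the outer induction on formula structure. The multiplicative, additive, exponential, and first-order quantifier cases are entirely standard and I would import them from \cite{chaudhuri03tr}. Next come the \emph{commutative cases}, where the last rule of one premise does not act on the cut formula; these permute the cut above that rule, decreasing a derivation height while keeping the cut formula fixed, and are justified by the inner induction. The identity/init base cases and the $!R$/$!L$ interface between statements~1 and~2 are routine.

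The genuinely new work, and the part I expect to be the main obstacle, is the principal and commutative cases for the two hybrid connectives "at" and "dn". For the "at" principal case, a right rule "at R" yields a premise proving $A @ u$ from a derivation of $A @ u$ at some irrelevant outer world $v$, while the matching "at L" reduces $(A \,\mathtt{at}\, u) @ v$ to $A @ u$; the reduction is a cut on $A$ at world $u$, structurally smaller, so this goes through cleanly. The delicate point is "dn": its rules perform the substitutions "[w/u]A" and "[v/u]A" of a \emph{world} into the cut formula. I must check that the induction measure is respected, namely that substituting a world for a bound world variable does not increase the structural size of the formula on which the recursive cut is performed, and that the freshness/capture-avoidance side conditions (world variables never appearing in terms, and vice versa) keep the substitutions well-behaved across the permutations. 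Once I confirm that world substitution commutes with the proof structure and preserves the formula-size ordering, the "dn" cases reduce to smaller cuts exactly as the quantifier cases do.

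Finally, I would verify that the two novel hybrid rules admit all the required commutative permutations with every other rule — in particular that a cut can always be pushed past an "at L", "dn L", "at R", or "dn R" that does not touch the cut formula — since these cannot be inherited from \cite{chaudhuri03tr}. With the hybrid principal and commutative cases discharged and the remaining cases imported, the lexicographic induction closes, establishing both cut principles; the full details appear in \cite{Chaudhuri-Despeyroux:13tr}.
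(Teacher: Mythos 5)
Your proposal takes essentially the same route as the paper: the paper proves cut by lexicographic structural induction on the cut formula and the two derivations in the style of structural cut-elimination following \cite{chaudhuri03tr}, with the linear and unrestricted cuts developed together (one kind permitted to justify the other at the same cut formula, exactly as in your treatment of the copy and $!$ cases), and it defers all details to \cite{Chaudhuri-Despeyroux:13tr}. The points you single out as the genuinely new work --- the principal and commutative cases for the two hybrid connectives, and in particular the check that substituting a world expression for a bound world variable does not increase the size of the cut formula, so the localization case reduces to a structurally smaller cut just like the quantifier case --- are precisely the content the paper relegates to the technical report.
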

\egroup

\begin{proof}
  By lexicographic structural induction on the given derivations, with cuts of
  kind 2 additionally allowed to justify cuts of kind 1. The style of proof
  sometimes goes by the name of \emph{structural
    cut-elimination}~\cite{chaudhuri03tr}.
  See \cite{Chaudhuri-Despeyroux:13tr} for the details.
\end{proof}

We can use the admissible cut rules to show that the following rules are
invertible: "tens L", "one L", "plus L", "zero L", "\exists L", "-o R", "with
R", "top R", and "\forall R". In addition, the four hybrid rules, "at R", "at
L", "{dn} R" and "{dn} L" are invertible. In fact, "dn" and "at" commute freely
with all non-hybrid connectives:

\begin{theorem}[Invertibility] \mbox{} \label{thm:invertibility}
The following rules are invertible:
  \begin{ecom}
  \item On the right: "with R", "top R", "{-o} R", "\forall R", "{dn} R" and "at R";
  \item On the left: "tens L", "one L", "plus L", "zero L", "\exists L", "!L", "{dn} L" and "at L". \qed
  \end{ecom}
\end{theorem}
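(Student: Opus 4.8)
The plan is to derive every invertibility statement directly from the admissibility of cut (Theorem~\ref{thm:cut}) together with the identity theorem, by the standard technique of \emph{cutting the conclusion against an $\eta$-expanded identity}. For a right rule introducing a connective, I would, for each of its premises, first build a small auxiliary derivation whose only non-identity step is the \emph{matching left rule}, and then cut the assumed derivation of the conclusion against this auxiliary derivation on the principal formula; symmetrically, for a left rule I would assemble the auxiliary derivation from the matching right rule and cut. Since cut is \emph{admissible} rather than merely present, the derivations so obtained can be turned into cut-free ones, so invertibility holds for the cut-free calculus as well. This makes the argument a single, uniform one-cut step per rule, with no outer induction required.

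Concretely, for "{-o} R" the identity theorem gives "\G ; A @ w ==> A @ w" and "\G ; B @ w ==> B @ w", so "{-o} L" yields "\G ; A @ w, A -o B @ w ==> B @ w"; cutting the assumed derivation of "\G ; \D ==> A -o B @ w" against this on "A -o B @ w" produces "\G ; \D, A @ w ==> B @ w". The additive right rule "with R" and left rule "plus L" are treated identically, emitting one derivation per premise through "with L_i" (resp.\ "plus R_i"); the left rules "tens L" and "one L" cut against "tens R" and "one R"; and the hybrid cases "at R", "at L", "{dn} R", "{dn} L" cut against their duals, e.g.\ for "{dn} R" I cut against "\G ; now u. A @ w ==> [w / u] A @ w", which "{dn} L" produces from the identity "\G ; [w / u] A @ w ==> [w / u] A @ w". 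The zero-premise rules "top R" and "zero L" are vacuously invertible, as there is nothing to reconstruct.

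The binder cases "\forall R", "\exists L", "{dn} R" and "{dn} L" need the most care: the freshness side-conditions and the substitutions "[\tau / \alpha]" and "[w / u]" must be arranged so that the auxiliary derivation instantiates the dual rule with exactly the eigenvariable (or current world) of the conclusion, without capture---for "\forall R" I instantiate "\forall L" at "\tau := \alpha", which is legitimate precisely because "\alpha" is fresh for the conclusion. The exponential "!L" is the one rule that does not fit the single-cut template verbatim: to pass from "\G ; \D, {! A} @ u ==> C @ w" to "\G, A @ u ; \D ==> C @ w" I would first weaken the assumption into the enlarged unrestricted context, then build "\G, A @ u ; . ==> {! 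A} @ u" from identity using copy followed by "!R", and finally cut on "{! A} @ u". I expect this exponential case, together with the bookkeeping of freshness in the binder cases, to be the only points demanding real attention; every remaining case is a mechanical instance of the cut-against-identity recipe.
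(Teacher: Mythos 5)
Your proposal is correct and matches the paper's own route: the text introduces the theorem with ``We can use the admissible cut rules to show that the following rules are invertible,'' i.e.\ exactly the cut-against-identity argument you give, with the matching dual rule applied to an identity sequent and a single admissible cut per premise. Your handling of the delicate cases (instantiating $\forall L$/$\exists R$ at the fresh eigenvariable, and routing $!L$ through weakening, copy, and $!R$ before cutting on ${!A} @ u$) is the standard and correct way to fill in the details the paper leaves implicit.
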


\begin{corollary}[consistency]
  There is no proof of ". ; . ==> zero @ w".
\end{corollary}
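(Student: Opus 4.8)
The plan is to obtain consistency as an immediate consequence of cut-admissibility. The claim is that the sequent ". ; . ==> zero @ w" has no derivation. By the cut theorem (Theorem~\ref{thm:cut}), any derivation can be replaced by a cut-free one, so it suffices to show that no cut-free derivation concludes with ". ; . ==> zero @ w"; this is precisely the fact already announced in the discussion preceding the cut theorem.

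I would then argue by inspecting the last rule of a hypothetical cut-free derivation, \ie the rule whose conclusion is the given sequent. Two structural observations make the case analysis trivial. First, the succedent "zero" has no right-introduction rule in \figref{seq-rules}: the only rule mentioning "zero" is "zero L", which is a left rule. Hence the last rule must be a left rule or one of the judgemental rules init and copy. Second, each of these rules requires a non-empty context in its conclusion: every left rule decomposes a principal formula taken from the linear context "\D", the copy rule needs a formula in the unrestricted context "\G", and init requires the linear context "\D" to consist of exactly one atom matching the succedent. Since in our sequent both "\G" and "\D" are empty, and the succedent "zero" is not atomic, none of these rules can have a conclusion of the required shape.

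With every rule excluded, I would conclude that no cut-free derivation ends in ". ; . ==> zero @ w", and therefore, by cut-admissibility, that no derivation exists at all. The entire weight of the argument rests on Theorem~\ref{thm:cut}; once cut is eliminated, the remaining step is a finite and exhaustive inspection of the rule set. The only point needing care is to confirm that the analysis is genuinely complete, and in particular that "zero" admits no right rule, so I do not anticipate any real obstacle beyond the cut theorem on which the corollary depends.
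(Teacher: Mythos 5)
Your argument is correct and takes essentially the same route as the paper, which likewise presents consistency as a direct consequence of \thmref{cut} together with the observation (made just before that theorem) that no cut-free derivation can end in the sequent ". ; . ==> zero @ w". Your explicit case analysis on the last rule --- no right rule for "zero", and every left, init, or copy rule requiring a non-empty context or an atomic succedent --- merely fills in the details the paper leaves implicit.
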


\begin{proof}
A straightforward consequence of \thmref{cut}.
\end{proof}

\hyll is conservative with respect to ordinary intuitionistic linear logic: as long as
no hybrid connectives are used, the proofs in \hyll are identical to those in
\ill~\cite{chaudhuri03tr}. The proof (omitted) is by simple structural
induction.

\begin{theorem}[conservativity]
  Call a proposition or multiset of propositions \emph{pure} if it contains no
  instance of the hybrid connectives and no instance of quantification over a world variable,
  and let "\G", "\D" and "A" be pure.
  Then,
  If "\G ; \D ==>_{\hyll} C @ w" is derivable, then so is "\G ; \D ==>_{\ill} C". \qed
\end{theorem}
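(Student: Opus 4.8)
The plan is to prove conservativity by establishing a bijection (or at least a sound translation in both directions) between \hyll derivations on pure sequents and \ill derivations, proceeding by structural induction on derivations. Since the statement only asks for the forward direction (from \hyll-derivability to \ill-derivability), I would focus on erasing the world annotations and checking that every \hyll rule, when restricted to pure propositions and pure contexts, collapses exactly onto the corresponding \ill rule.

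First I would define an erasure map $\lfloor\,\cdot\,\rfloor$ that strips the world label "@ w" from each judgement, sending a pure judgement "A @ w" to the \ill judgement "A is true", and extending it pointwise to contexts. The key observation is that on pure propositions the erasure is the identity on the propositional structure: no hybrid connective and no world quantifier appears, so "\lfloor A \rfloor = A" as an \ill proposition. I would then argue that purity is preserved upward through every rule instance occurring in a derivation of a pure sequent — that is, if the conclusion is pure, then so are all premises. This is immediate for the judgemental, multiplicative, additive, exponential, and term-quantifier rules, since their premises are built from subformulas of pure conclusions (for "\forall L" and "\exists R" one uses that the instantiating "\tau" is a term, not a world, and that substitution of a term into a pure proposition yields a pure proposition).

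The crux of the argument is to observe that the hybrid rules ("at R", "at L", "{dn} R", "{dn} L") and the world-quantifier rules ("\forall" and "\exists" over world variables) simply \emph{cannot occur} in a derivation whose endsequent is pure, precisely because purity propagates upward: a principal formula of any of these rules is a hybrid or world-quantified proposition, which would violate purity of the context or succedent in which it sits. Hence every rule actually used in the derivation is the \hyll image of a unique \ill rule, and applying the erasure to the whole derivation tree yields a well-formed \ill derivation of "\lfloor\G\rfloor ; \lfloor\D\rfloor ==>_{\ill} \lfloor C\rfloor", which by purity is exactly "\G ; \D ==>_{\ill} C".

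I expect the main obstacle to be the bookkeeping around substitution and the term quantifiers: I must confirm that instantiating a pure proposition with a first-order term never introduces a hybrid connective or a world variable, so that the premise of "\forall L" / "\exists R" remains pure and the induction hypothesis applies. The separation enforced in the syntax — world variables cannot appear in terms and term variables cannot appear in worlds — is exactly what guarantees this, so I would cite that design restriction as the linchpin. Everything else is a routine rule-by-rule verification that erasure commutes with rule application, which the paper rightly defers as "simple structural induction."
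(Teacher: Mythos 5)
Your proposal is correct and matches the paper's (omitted) argument, which is exactly the ``simple structural induction'' on the \hyll{} derivation that you spell out: purity is preserved from conclusion to premises in the cut-free calculus, so the hybrid and world-quantifier rules never fire, and erasing the world labels turns each remaining rule instance into the corresponding \ill{} rule. Your identification of the syntactic separation between terms and worlds as the reason the term-quantifier cases go through is the right point to single out.
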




In the rest of this paper we use the following derived connectives:

\bgroup 
\begin{definition}[modal connectives] \label{defn:connectives} \mbox{}
  \vspace{-1ex}
  \begin{gather*}
    \begin{aligned}
      "box A" &\triangleq "now u. all w. (A at u . w)" & \qquad
      "dia A" &\triangleq "now u. ex w. (A at u . w)" \\
      "rate v A" &\triangleq "now u. (A at u . v)" &
      "!! A" &\triangleq "all u. (A at u)"
    \end{aligned}
  \end{gather*}
\end{definition}
\egroup

\noindent
The connective "\rho" represents a form of delay. Note its derived right rule:
\begin{gather*}
  \Ic["\rho R"]{"\G ; \D |- rate v A @ w"}{"\G ; \D |- A @ {w . v}"}
\end{gather*}
The proposition "rate v A" thus stands for an \emph{intermediate state} in a
transition to "A".
Informally it can be thought to be ``"v" before "A"''; thus, "box A = all v.
rate v A" represents \emph{all} intermediate states in the path to "A", and "dia
A = ex v. rate v A" represents \emph{some} such state.
The modally unrestricted proposition "!! A" represents a resource that is
consumable in any world; it is intended to be used to make the transition rules
available everywhere.

It is worth remarking that the reachability relation in \hyll is trivial: every
world that can be defined is reachable from every other.
To illustrate, the (linear form of the) axioms of the S5 modal logic are
derivable in \hyll; in particular, the sequent ". ; dia A @ w ==> box dia A @
w", which represents the 5 axiom, is provable.
\hyll is, in fact, more expressive than S5 as it allows direct manipulation of
the worlds using the hybrid connectives: for example, the $\rho$ connective is
not definable in S5.

\subsection{Temporal Constraints}
\label{sec:hyllt}

As a pedagogical example, consider the constraint domain "\mathcal{T} = \langle
\Reals^+, +, 0\rangle" representing instants of time. This domain can be used to
define the lifetime of resources, such as keys, sessions, or delegations of
authority. Delay (\defnref{connectives}) in \hyllt represents intervals of time;
"rate d A" means ``"A" will become available after delay "d"'', similar to
metric tense logic~\cite{prior57book}.
This domain is very permissive because addition is commutative, resulting in the
equivalence of "rate u rate v A" and "rate v rate u A".
The ``forward-looking'' connectives
"G" (always in the future) and "F" (sometimes in the future) of ordinary tense logic
are precisely "box" and "dia" of \defnref{connectives}.

In addition to the
future connectives, the domain "\mathcal{T}" also admits past connectives if we add
saturating subtraction (\ie, "a - b = 0" if "b \ge a") to the language of
worlds. We can then define the duals
"H" (historically) and "O" (once) of "G" and "F" as:
%
\begin{gather*}
\textrm{H~A} \triangleq\ "now u. all w. (A at u - w)"
\qquad
\textrm{O~A} \triangleq\ "now u. ex w. (A at u - w)"
\end{gather*}
While this domain does not have any branching structure like CTL, it is
expressive enough for many common idioms because of the branching structure
of derivations involving $\oplus$. CTL reachability (``in some path in some
future''), for instance, is the same as our "dia"; similarly
CTL steadiness (``in some path for all futures'') is the same as $\Box$.
CTL stability (``in all paths in all futures''), however, has no direct correspondance in HyLL
(see however \cite{deMaria-Despeyroux-Felty:14rep}
for a correspondance in particular cases).
Note that model checking cannot cope with temporal expresssions involving the
``in all paths'' notion anyway
\footnote{
at least in their full generality, involving an infinite number of states.
}.

On the other hand, the availability of linear reasoning, enriched with
modalities, makes certain kinds of reasoning in \hyll much more natural than in
ordinary temporal logics.
One important example is of \emph{oscillation} between states in systems with
kinetic feedback. In a temporal specification language such as
BIOCHAM~\cite{chabrier05cmsb}, only aperiodic oscillations are representable,
while in \hyll an oscillation between "A" and "B" with delay "d" is represented
by the rule "!! (A -o rate d B) with (B -o rate d A)" (or "!! (A -o dia B) with
(B -o dia A)" if the oscillation is aperiodic).
If \hyllt were extended with constrained implication and conjunction in the
style of CILL~\cite{saranli07icra} or $\eta$~\cite{deyoung08csf}, then we can
define localized versions of "box" and "dia", such as ``"A" is true
everywhere/somewhere in an interval''.

For examples of applications of HyLL with temporal constraints, the interested
reader can see~\cite{deMaria-Despeyroux-Felty:14rep}, which gives an encoding of
a simple biological system and its temporal properties in \chyll[T'], where
"\mathcal{T'} = \langle \mathbb{N}, +, 0\rangle" represents discrete instants of
time.
We will, instead, use a version of HyLL dedicated to continuous time Markov
Chains with exponential distribution, as used in S$\pi$.
We introduce this type of constraints below.

\subsection{Probabilistic Constraints}
\label{sec:hyllp-pi}

Transitions in practice rarely have precise delays.
Phenomenological and experimental evidence is used to construct a probabilistic
model of the transition system where the delays are specified as probability
distributions of continuous (or discrete) variables.
A number of variations of monoids representing probabilistic and stochastic
constraints are presented in~\cite{Chaudhuri-Despeyroux:13tr}, both for the
general case and for the special case of Markov processes.

One of the standard models of stochastic transition systems is continuous time
Markov chains (CTMCs) where the delays of transitions between states are
distributed according to the Markov assumption of memorylessness (Markov
processes) with the further condition that their state-spaces are
countable~\cite{Rogers-Williams-vol1-book}.
In the synchronous stochastic $\pi$-calculus (S$\pi$), the probability of a
reaction with \emph{rate} $r$ is given by continuous time Markov chains with
exponential distribution of parameter $r$ (See \cite{phillips06tcsb}).
To describe such processes, we shall take $\Reals^+$ to represent the
\emph{rates} of their exponential distribution.
To encode the S$\pi$ calculus in a suitable instanciation of HyLL, we only need
a \emph{symbolic} operation on the rates.
This abstract treatment can be made fully precise, but this would require a
detour into measure theory that is beyond the scope of this paper;
see~\cite{Chaudhuri-Despeyroux:13tr} for the details.

\begin{definition} \label{defn:rates}%
  The \emph{rates domain} $\mathcal{R}$ is the monoid $\mathcal{R} = \langle
  {\Reals^+}^*, ".", "nil"\rangle$ of lists of positive reals, where "." is
  concatenation of lists, and "nil" is the empty list.
\end{definition}

\noindent%
Worlds $r \in \mathcal{R}$ represent the (rates of the) \emph{sequence} of
actions that have led to the current world from a given fixed initial world.
We might equivalently have chosen $\mathcal{T} = \langle \Reals^+, +,0\rangle$ to
represent average time delays, which would be the sum of the reciprocals of the
rates in the list of rates.
We choose to use lists of rates because they are more informative than average
time delays.
Note that since our rate functions are assumed to be memoryless, the order of
the list of rates is immaterial, so we can easily relax it to a multi-set of
rates; this change could not substantially alter the development of this paper.

\section{Focusing}
\label{sec:focusing}

As \hyll is intended to represent transition systems adequately, it is crucial
that \hyll derivations in the image of an encoding have corresponding
transitions. However, transition systems are generally specified as rewrite
algebras over an underlying congruence relation. These congruences have to be
encoded propositionally in \hyll, so a \hyll derivation will generally require
several inference rules to implement a single transition; moreover, several
trivially different reorderings of these ``micro'' inferences would correspond
to the same transition. It is therefore futile to attempt to define an
operational semantics directly on \hyll inferences.

We restrict the syntax to focused derivations~\cite{andreoli92jlc}, which
ignores many irrelevant rule permutations in a sequent proof and divides the
proof into clear \emph{phases} that define the grain of atomicity. The logical
connectives are divided into two classes, \emph{negative} and \emph{positive},
and rule permutations for connectives of like polarity are confined to
\emph{phases}. A \emph{focused derivation} is one in which the positive and
negative rules are applied in alternate maximal phases in the following way: in
the \emph{active} phase, all negative rules are applied (in irrelevant order)
until no further negative rule can apply; the phase then switches and one
positive proposition is selected for \emph{focus}; this focused proposition is
decomposed under focus (\ie, the focus persists to its sub-formulas) until it
becomes negative, and the phase switches again.

As noted before, the logical rules of the hybrid connectives "at" and "dn" are
invertible, so they can be considered to have both polarities. It would be valid
to decide a polarity for each occurrence of each hybrid connective
independently; however, as they are mainly intended for book-keeping during
logical reasoning, we define the polarity of these connectives in the following
\emph{parasitic} form: if its immediate subformula is positive (resp. negative)
connective, then it is itself positive (resp. negative). These connectives
therefore become invisible to focusing. This choice of polarity can be seen as a
particular instance of a general scheme that divides the "dn" and "at"
connectives into two polarized forms each. To complete the picture, we also
assign a polarity for the atomic propositions; this restricts the shape of
focusing phases further~\cite{chaudhuri08jar}. The full syntax of positive ($P,
Q, \ldots$) and negative ($M, N, \ldots$) propositions is as follows:

\medskip
\bgroup 
\hspace{-1.5em}
\begin{tabular}{l@{\ }r@{\ \ }l}
  "P, Q, ..." & "::=" & "p~\vec t OR P tens Q OR one OR P plus Q OR zero OR {! N} OR \fex \alpha P OR now u. P OR (P at w) OR pos N" \\
  "N, M, ..." & "::=" & "n~\vec t OR N with N OR top OR P -o N OR \fall \alpha N OR now u. N OR (N at w) OR neg P"
\end{tabular}
\egroup

\begin{figure}[p]
\centering
\framebox{
\begin{minipage}{.95\textwidth}
  \setlength{\parindent}{0pt}
  \bgroup \small
  {\bf \large Focused logical rules}
  \begin{gather*}
    \I[li]{"\G ; foc{n\ \vec t @ w} ==> pos {n\ \vec t @ w}"}
    \quad
    \I["neg L"]{"\G ; \D ; foc{neg P @ u} ==> Q @ w"}{"\G ; \D ; P @ u ==> . ; Q @ w"}
    \quad
    \I["pos R"]{"\G ; \D ==> foc{pos N @ w}"}{"\G ; \D ; . ==> N @ w ; ."}
    \\[1ex]
    \I["with L_i"]{"\G ; \D ; foc{N_1 with N_2 @ u} ==> Q @ w"}{
      "\G ; \D ; foc{N_i @ u} ==> Q @ w"
    }
    \quad
    \I["{-o}L"]{"\G ; \D, \X ; foc{P -o N @ u} ==> Q @ w"}{
      "\G ; \D ==> foc{P @ u}" & "\G ; \X ; foc{N @ u} ==> Q @ w"
    }
    \\[1ex]
    \I["\forall L"]{"\G ; \D ; foc {\fall \alpha N @ u} ==> Q @ w"}{
      "\G ; \D ; foc {[\tau / \alpha] N @ u} ==> Q @ w"
    }
    \quad
    \I["{dn}LF"]{"\G ; \D ; foc{now u. N @ v} ==> Q @ w"}{"\G ; \D ; foc{[v / u] N @ v} ==> Q @ w"}
    \\[1ex]
    \I["{at}LF"]{"\G ; \D ; foc{(N at u) @ v} ==> Q @ w"}{"\G ; \D ; foc{N @ u} ==> Q @ w"}
    \quad
    \I[ri]{"\G ; neg {p\ \vec t} @ w ==> foc {p\ \vec t @ w}"}
    \\[1ex]
    \I["tens R"]{"\G ; \D, \X ==> foc{P tens Q @ w}"}{
      "\G ; \D ==> foc{P @ w}" & "\G ; \X ==> foc{Q @ w}"
    }
    \quad
    \I["plus R_i"]{"\G ; \D ==> foc{P_1 plus P_2 @ w}"}{"\G ; \D ==> foc{P_i @ w}"}
    \\[1ex]
    \I["\exists R"]{"\G ; \D ==> foc{\fex \alpha P @ w}"}{"\G ; \D ==> foc{[\tau/\alpha] P @ w}"}
    \quad
    \I["!R"]{"\G ; . ==> foc{{!N} @ w}"}{"\G ; . ; . ==> N @ w ; ."}
    \quad
    \I["{dn}RF"]{"\G ; \D ==> foc{now u. P @ w}"}{"\G ; \D ==> foc{[w/u] P @ w}"}
    \\[1ex]
    \I["at RF"]{"\G ; \D ==> foc{(P at u) @ w}"}{"\G ; \D ==> foc{P @ u}"}
    \quad
    \I["one R"]{"\G ; . ==> foc {one @ w}"}
  \end{gather*}

  {\bf \large Active logical rules}

  ("\RR" of the form ". ; Q @ w" or "N @ w ; .", and "\LL" of the form
  "\G ; \D ; \W")
  \begin{gather*}
    \I["tens L"]{"\LL, P tens Q @ u ==> \RR"}{"\LL, P @ u, Q @ u ==> \RR"}
    \quad
    \I["one L"]{"\LL, one @ u ==> \RR"}{"\LL ==> \RR"}
    \quad
    \I["plus L"]{"\LL, P plus Q @ u ==> \RR"}{
      "\LL, P @ u ==> \RR" & "\LL, Q @ u ==> \RR"
    }
    \\[1ex]
    \I["{dn}LA"]{"\LL, now u. P @ v ==> \RR"}{"\LL, [v/u] P @ v ==> \RR"}
    \quad
    \I["{at}LA"]{"\LL, (P at u) @ v ==> \RR"}{"\LL, P @ u ==> \RR"}
    \quad
    \I["\exists L^\alpha"]{"\LL, \fex \alpha P @ u ==> \RR"}{"\LL, P @ u ==> \RR"}
    \\[1ex]
    \I["{!L}"]{"\G ; \D ; \W, {!N @ u} ==> \RR"}{"\G, N @ u ; \D ; \W ==> \RR"}
    \quad
    \I["pos L"]{"\G ; \D ; \W, {pos N @ w} ==> \RR"}{"\G ; \D, N @ w ; \W ==> \RR"}
    \quad
    \I[lp]{"\G ; \D ; \W, {p\ \vec t} @ w ==> \RR"}{"\G ; \D, neg p\ \vec t ; \W ==> \RR"}
    \\[1ex]
    \I["with R"]{"\LL ==> M with N @ w ; ."}{"\LL ==> M @ w ; ." & "\LL ==> N @ w ; ."}
    \quad
    \I["top R"]{"\LL ==> top @ w ; ."}
    \quad
    \I["{-o}R"]{"\LL ==> P -o N @ w ; ."}{"\LL, P @ w ==> N @ w ; ."}
    \\[1ex]
    \I["{dn}RA"]{"\LL ==> now u. N @ w ; ."}{"\LL ==> [w/u] N @ w ; ."}
    \quad
    \I["{at}RA"]{"\LL ==> (N at u) @ w"}{"\LL ==> N @ u"}
    \quad
    \I["\forall R^\alpha"]{"\LL ==> \fall \alpha N @ u ; ."}{"\LL ==> N @ u ; ."}
    \\[1ex]
    \I["neg R"]{"\LL ==> {neg P @ w} ; ."}{"\LL ==> . ; P @ w"}
    \quad
    \I[rp]{"\LL ==> n\ \vec t @ w ; ."}{"\LL ==> . ; pos n\ \vec t @ w"}
    \quad
    \I["zero L"]{"\LL, zero @ u ==> \RR"}
  \end{gather*}

  {\bf \large Focusing decisions}
%
  %
  \begin{gather*}
    \I[lf]{"\G ; \D, N @ u ; . ==> . ; Q @ w"}{"\G ; \D ; foc {N @ u} ==> Q @ w" & N \text{ not } "neg p\ \vec t"}
    \quad
    \I[cplf]{"\G, N @ u ; \D ; . ==> . ; Q @ w"}{"\G, N @ u ; \D ; foc {N @ u} ==> Q @ w"}
    \\[1ex]
    \I[rf]{"\G ; \D ; . ==> . ; P @ w"}{"\G ; \D ==> foc {P @ w}" & P \text{ not } "pos n\ \vec t"}
  \end{gather*}
  \egroup
  \end{minipage}}
\caption{Focusing rules for \hyll.}
\label{fig:foc-rules}
\end{figure}

\medskip
\noindent
The two syntactic classes refer to each other via the new \emph{shift}
connectives "neg" and "pos". Sequents in the focusing calculus are of the
following forms.
\begin{center} \small
  \begin{tabular}{r@{\ }l@{\qquad}r@{\ }l}
    \( \left.
    \begin{array}[c]{l}
      "\G ; \D ; \W ==> . ; P @ w" \\
      "\G ; \D ; \W ==> N @ w ; ."
    \end{array}
    \right\} \) & active &
    \( \left.
    \begin{array}[c]{l}
      "\G ; \D ; foc{N @ u} ==> P @ w" \\
      "\G ; \D ==> foc{P @ w}" \\
    \end{array}
    \right\} \) & focused \\
  \end{tabular}
\end{center}
In each case, "\G" and "\D" contain only negative propositions (\ie, of the form
"N @ u") and "\W" only positive propositions (\ie, of the form "P @ u").
The full collection of inference rules are in \figref{foc-rules}.
The sequent form "\G ; \D ; . ==> . ; P @ w" is called a \emph{neutral sequent}; from
such a sequent, a left or right focused sequent is produced with the rules lf,
cplf or rf. Focused logical rules are applied (non-deterministically) and focus
persists unto the subformulas of the focused proposition as long as they are of
the same polarity; when the polarity switches, the result is an active sequent,
where the propositions in the innermost zones are decomposed in an irrelevant
order until once again a neutral sequent results.

Soundness of the focusing calculus with respect to the ordinary sequent calculus
is immediate by simple structural induction. In each case, if we forget the
additional structure in the focused derivations, then we obtain simply an
unfocused proof. We omit the obvious theorem. Completeness, on the other hand,
is a hard result. We omit the proof because focusing is by now well known for
linear logic, with a number of distinct proofs via focused cut-elimination (see
\eg the detailed proof in~\cite{chaudhuri08jar}). The hybrid connectives pose no
problems because they allow all cut-permutations.

\bgroup 
\begin{theorem}[focusing completeness]
  Let "\G^-" and "C^- @ w" be negative polarizations of "\G" and "C @ w" (that
  is, adding "neg" and "pos" to make "C" and each proposition in "\G" negative)
  and "\D^+" be a positive polarization of "\D". If "\G ; \D ==> C @ w", then
  ". ; . ; {!  \G^-}, \D^+ ==> C^- @ w ; .".
\end{theorem}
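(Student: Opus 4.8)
The plan is to follow the \emph{structural focalization} recipe of~\cite{chaudhuri08jar}, adapted to the situated judgements of \hyll. Soundness (erasing the focusing annotations from a focused derivation yields an ordinary derivation) is the trivial direction and was already noted, so the work is entirely in completeness. The two pillars of the argument are (i) a cut-admissibility theorem for the \emph{focused} calculus of \figref{foc-rules}, and (ii) an identity-expansion theorem for that calculus; completeness then follows by induction over an unfocused derivation, recasting each unfocused rule as an admissible rule of the focused system.

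First I would establish focused cut admissibility. Because the focused sequents come in several shapes --- right-focused "\G ; \D ==> foc{P @ u}", left-focused "\G ; \D ; foc{N @ u} ==> Q @ w", and the various active and neutral forms --- the statement is really a bundle of mutually dependent cut principles: one splicing a right-focused producer of "P" into an active context that consumes "P", one for the dual left-focused case, plain active/active cuts, and a cut on an unrestricted "!"-assumption paralleling the second principle of \thmref{cut}. All are proved simultaneously by a lexicographic induction on the cut proposition and then on the heights of the two derivations. The only departure from the \ill development is the presence of world labels and hybrid rules; since "at" and "dn" are invertible (\thmref{invertibility}) and merely substitute or rename worlds without disturbing the resource structure, every principal and commutative case involving them permutes freely past the cut --- this is exactly the ``all cut-permutations'' remark made above. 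The world substitutions "[w/u]" and "[v/u]" thread through the induction routinely, using that substitution commutes with every rule.

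Next I would prove identity expansion, which lifts the identity theorem to the focused setting: for every proposition "A" and world "u" there is a focused derivation witnessing the identity at "A @ u". This goes by induction on the structure of "A". The base cases are precisely the focused atomic rules li and ri --- this is why atoms were assigned polarities --- while each connective case decomposes "A" under focus on one side and in the active phase on the other, appealing to the induction hypotheses for the immediate subformulas. Under the parasitic polarity discipline the cases for "(A at u)" and "now u. A" collapse to the case for their subformula, so once more the hybrid connectives contribute nothing essential.

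With cut and identity available, completeness is a structural induction on the given unfocused derivation of "\G ; \D ==> C @ w". Each unfocused inference is recast as an admissible rule of the focused calculus on the polarized sequents: one produces the focused introduction of the principal connective (in general against freshly identity-expanded copies of the side formulas) and grafts on the translated premises, supplied by the induction hypothesis, using the appropriate focused cut. The shift connectives "neg" and "pos" introduced by the polarizations "\G^-", "C^- @ w", "\D^+" are absorbed by the shift rules together with the decision rules lf, cplf and rf, which shuttle formulas between the neutral, focused and active zones exactly at the polarity boundaries. I expect the focused cut-admissibility theorem to be the one real obstacle: identity expansion and the final simulation are essentially bookkeeping once cut is in hand. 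The saving grace, as stressed above, is that the hybrid apparatus touches cut only through world substitution and through invertible rules, so it adds no hard case to the single part of the argument that is genuinely hard.
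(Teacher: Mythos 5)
Your plan is essentially the proof the paper has in mind: the paper omits the details precisely because completeness follows by the standard focused cut-elimination recipe of the cited literature (focused cut admissibility plus identity expansion, then simulating each unfocused rule as an admissible focused rule), with the observation---which you also make---that the hybrid connectives are invertible and permit all cut permutations, so they contribute no new cases. Nothing in your outline diverges from that route.
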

\egroup

\section{Encoding the Synchronous Stochastic $\pi$-calculus}
\label{sec:spi}

In this section, we shall illustrate the use of \hyllr (Definiition
\ref{defn:rates}) as a logical framework for constrained transition systems by
encoding the syntax and the operational semantics of the synchronous stochastic
$\pi$-calculus (\spi), which extends the ordinary $\pi$-calculus by assigning to
every channel and internal action an \emph{inherent} rate of synchronization.
In \spi, each rate characterises an exponential distribution
such that the probability of a reaction with rate $r$
occuring within time $t$ is given by
$1 - e^{-rt}$ \cite{phillips06tcsb},
where the \emph{rate} $r$ is a parameter.
%
%
%
We shall encode \spi in \hyllr:
a \spi reaction with rate $r$ will be encoded by a transition
of a probability described by a random variable  
with exponential distribution of parameter $r$;
Worlds $r$ in $\mathcal{R}$ will represent the list of the \emph{rates} of the
transitions performed so far.

\hyllr can therefore be seen as a formal language for expressing \spi executions (traces).
For the rest of this section we shall use "r, s, t, \ldots" instead of "u, v, w,
\ldots" to highlight the fact that the worlds represent (lists of) rates
(overloading single elements and the list of single elements).
We do not directly use rates because the syntax and transitions of \spi are
given generically for a $\pi$-calculus with labelled actions, and it is only
the interpretation of the labels that involves probabilities.

We first summarize the
syntax of \spi, which is a minor variant of a number of similar presentations
such as~\cite{phillips06tcsb}. For hygienic reasons we divide entities into the
syntactic categories of \emph{processes} ($\pr P, \pr Q, \ldots$) and
\emph{sums} ($\pr M, \pr N, \ldots$), defined as follows. We also include
environments of recursive definitions ("pr E") for constants.

\smallskip
\bgroup 
\begin{tabular}{l@{\quad}l@{\ }r@{\ }l}
\emph{(Processes)} & "pr P, pr Q, ..." & "::=" & "pr{\nu_r\ P} OR pr{P par Q} OR pr 0 OR pr{X_n\,x_1 \cdots x_n} OR pr M" \\
\emph{(Sums)} & "pr M, pr N, ..." & "::=" & "pr {act{oup{x}(y)} P} OR pr {act{inp{x}} P} OR pr{act{\tau_r} P} OR pr{M + N}" \\
\emph{(Environments)} & "pr E" & "::=" & "pr {E, X_n \triangleq P} OR pr ."
\end{tabular}
\egroup

\smallskip

"pr{P par Q}" is the parallel composition of "pr P" and "pr Q", with unit "pr
0". The restriction "pr{\nu_r\ P}" abstracts over a free channel "x" in the
process "pr{P\,x}". We write the process using higher-order abstract
syntax~\cite{pfenning88pldi}, \ie, "pr{P}" in "pr{\nu_r\ P}" is (syntactically)
a function from channels to processes. This style lets us avoid cumbersome
binding rules in the interactions because we reuse the well-understood binding
structure of the $\lambda$-calculus. A similar approach was taken in the
earliest encoding of the (ordinary) $\pi$-calculus in (unfocused) linear
logic~\cite{miller92welp}, and is also present in the encoding in
CLF~\cite{cervesato03tr}.

A sum is a non-empty choice ("+") over terms with \emph{action prefixes}: the
output action "oup{x}(y)" sends "y" along channel "x", the input action "inp{x}"
reads a value from "x" (which is applied to its continuation process), and the
internal action "\tau_r" has no observable I/O behaviour. Replication of
processes happens via guarded recursive definitions~\cite{milner99book};
in~\cite{Regev01psb} it is argued that they are more practical for programming
than the replication operator "!". In a definition "pr{X_n \triangleq P}", "pr
{X_n}" denotes a (higher-order) defined constant of arity "n"; given channels
"x_1, ..., x_n", the process "pr {X_n\,x_1 \cdots x_n}" is synonymous with
"pr{P\,x_1 \cdots x_n}". The constant "pr{X_n}" may occur on the right hand side
of any definition in "pr E", including in its body "pr P", as long as it is
prefixed by an action; this prevents infinite recursion without progress.

Interactions are of the form "pr E |- pr P -> [r] pr Q" denoting a transition
from the process "pr P" to the process "pr Q", in a global environment "pr E",
by performing an action at rate "r". Each channel "x" is associated with an
inherent rate specific to the channel, and internal actions "\tau_r" have rate
"r". The restriction "pr{\nu_r\ P}" defines the rate of the abstracted channel
as "r".

\sdef{crate}{\mathop{\mathrm{rate}}}

\begin{figure*}[tp]
\centering
\framebox{ \begin{minipage}{.97\linewidth}
  \smaller[2]
  \emph{Interactions} \vspace{-1em}

  \begin{gather*}
    \I[\set{SYN}]{"pr{act{oup{x}(y)} P + M par act{inp{x}} Q + M'} ->[crate(x)] pr{P par Q\,y}"}{}
    \SP
    \I[\set{INT}]{"pr{act{\tau_r} P} ->[r] pr P"}{}
    \\
    \I[\set{PAR}]{"pr{P par Q} ->[r] pr{P' par Q}"}{"pr P ->[r] pr{P'}"}
    \SP
    \I[\set{RES}]{"pr{\nu_s\ P} ->[r] pr{\nu_s\ Q}"}{
      \forall x_s. \Bigl("pr{P\,x} ->[r] pr{Q\,x}"\Bigr)
    }
    \SP
    \I[\set{CONG}]{"pr{P'} ->[r] pr{Q'}"}{"pr{P} ->[r] pr{Q}" & "pr{P} == pr{P'}" & "pr{Q} == pr{Q'}"}
  \end{gather*}

  \vspace{-1.5em} \dotfill

  \emph{Congruence} \vspace{-1em}
  \begin{gather*}
    \I{"pr{P par 0} == pr P"}{}
    \LSP
    \I{"pr{P par Q} == pr{Q par P}"}{}
    \LSP
    \I{"pr{P par (Q par R)} == pr{(P par Q) par R}"}{}
    \LSP
    \I{"pr{\nu_r\ 0 == 0}"}{}
    \LSP
    \I{"pr{E} |- pr{X_n\,x_1 \cdots x_n} == pr{P\,x_1 \cdots x_n}"}
      {"pr{X_n \triangleq P} \in pr{E} "}
    \\
    \I{"pr{\nu_r (lam x. \nu_s (lam y. P))} == pr{\nu_s (lam y. \nu_r (lam x. P))}"}{}
    \LSP
    \I{"pr{\nu_r\ P} == pr{\nu_r\ Q}"}{
      \forall x_r.\left("pr{P\,x} == pr{Q\,x}"\right)
    }
    \LSP
    \I{"pr{\nu_r (lam x. P par Q(x))} == pr{P par \nu_r\ Q}"}
    \\
    \I{"pr{P par Q} == pr{P' par Q}"}{"pr P == pr{P'}"}
    \LSP
    \I{"pr{act{oup{x}(m)} P} == pr{act{oup{x}(m)} P'}"}{"pr P == pr {P'}"}
    \LSP
    \I{"pr{act{inp{x}} P} == pr{act{inp{x}} Q}"}{
      \forall n.\ \left("pr{P\,n} == pr{Q\,n}"\right)
    }
    \LSP
    \I{"pr{act{\tau_r} P} == pr{act{\tau_r} P'}"}{
      "pr P == pr {P'}"
    }
    \LSP
    \I{"pr{M + N} == pr{N + M}"}
    \\
    \I{"pr{M + (N + K)} == pr{(M + N) + K}"}
    \LSP
    \I{"pr{M + N} == pr{M' + N}"}
      {"pr M == pr {M'}"}
    \LSP
    \I{"pr{M + N} == pr M"}{"pr M == pr N"}
  \end{gather*}
\end{minipage}}
\caption{Interactions and congruence in \spi. The environment $E$ is elided in most rules.}
\label{fig:spi}
\end{figure*}

The full set of interactions and congruences are in fig.~\ref{fig:spi}. We
generally omit the global environment "pr E" in the rules as it never changes.
It is possible to use the congruences to compute a normal form for processes
that are a parallel composition of sums and each reaction selects two suitable
sums to synchronise on a channel until there are no further reactions possible;
this refinement of the operational semantics is used in "spi" simulators such as
SPiM~\cite{phillips04bc}.

\bgroup 
\begin{definition}[syntax encoding] \mbox{} \label{defn:sencoding}
  \begin{ecom}[1.]
  \item The encoding of the process "pr P" as a positive proposition, written
    "proc P", is as follows ("dt" is a positive atom and \crt a negative atom).
    \begin{align*}
      "proc{P par Q}" &= "proc P tens proc Q"
      & \SP
      "proc {\nu_r\ P}" &= "ex x. ! (rt x at r) tens proc {P\,x}"
      \\
      "proc 0" &= "one"
      &
      "proc{X_n\,x_1 \cdots x_n}" &= "X_n\,x_1 \cdots x_n"
      \\
      "proc M" &= "pos {(dt -o sum M)}"
    \end{align*}
  \item The encoding of the sum "pr M" as a negative proposition, written "sum
    M", is as follows (\cn{out}, \cn{in} and \cn{tau} are positive atoms).
    \begin{align*}
      "sum {M + N}" &= "sum M with sum N"
      &\hspace{-2em}
      "sum {act{oup x(m)} P}" &= "neg (\cout x~m tens proc P)"
      \\
      "sum {act{inp x} P}" &= "all n. neg (\cin x~n tens proc {P\,n})"
      &
      "sum {act{\tau_r} P}" &= "neg (\ctau ~r tens proc P)"
    \end{align*}
  \item The encoding of the definitions "pr E" as a context, written "env E", is
    as follows.
    \begin{align*}
      "env{E, X_n \triangleq P} " \ & = \
         "env{E}, !! \fall{x_1, ..., x_n} X_n\,x_1 \cdots x_n o-o proc{P\,x_1 \cdots x_n} " \\
      "env{ . } " \ & = \  " . "
    \end{align*}
    where "P o-o Q" is defined as "(P -o neg Q) with (Q -o neg P)".
  \end{ecom}
\end{definition}
\egroup

The encoding of processes is positive, so they will be decomposed in the active
phase when they occur on the left of the sequent arrow, leaving a collection of
sums. The encoding of restrictions will introduce a fresh unrestricted
assumption about the rate of the restricted channel. Each sum encoded as a
processes undergoes a polarity switch because "-o" is negative; the antecedent
of this implication is a \emph{guard} "dt". This pattern of guarded switching of
polarities prevents unsound congruences such as "pr{act{oup x(m)} act{oup y(n)}
  P} == pr{act{oup y(n)} act{oup x(m)} P}" that do not hold for the synchronous
$\pi$ calculus. To see this, note that "proc{pr{act{oup x(m)} act{oup y(n)} P}}"
has the form "X -o (A tens (X -o B tens C))" (eliding the polarity shifts) which
is not provably equivalent to "X -o (B tens (X -o A tens C))" in both linear
logic and \hyll. Thus, even though we use a commutative connective "tens" in
"sum{act{oup x(m)} P}", output actions are still sequential and synchronous.

The guard "dt" also \emph{locks} the sums in the context: the \spi interaction
rules \set{INT} and \set{SYN} discard the non-interacting terms of the sum, so
the environment will contain the requisite number of "dt"s only when an
interaction is in progress.
The action prefixes themselves are also synchronous, which causes another
polarity switch. Each action releases a token of its respective kind---\cn{out},
\cn{in} or \cn{tau}---into the context. These tokens must be consumed by the
interaction before the next interaction occurs. For each action, the (encoding
of the) continuation process is also released into the context.

The proof of the following congruence lemma is omitted. Because the encoding is
(essentially) a "tens / with" structure, there are no distributive laws in
linear logic that would break the process/sum structure.

\begin{theorem}[congruence]
  \label{thm:congr} 
  "pr E |- pr P == pr Q" iff both "env E @ rid ; . ; proc P @ rid ==> . ; proc Q @ rid"
  and "env E @ rid ; . ; proc Q @ rid ==> . ; proc P @ rid".
\end{theorem}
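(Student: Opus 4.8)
The plan is to prove the two directions of the biconditional separately, exploiting the compositional and almost purely multiplicative/additive shape of the encoding. For the forward direction (structural congruence implies mutual provability) I would induct on the derivation of "pr E |- pr P == pr Q" using the rules of \figref{spi}, exhibiting the two required derivations for each congruence axiom; it is cleanest to build ordinary sequent proofs and then obtain focused ones from the focusing completeness theorem. The unit, commutativity, and associativity laws for "par" follow because "proc" sends "par" to "tens" and "pr 0" to "one", and these laws hold for "tens" and "one" up to provable equivalence; the sum laws follow identically from the "with" structure of "sum". The restriction congruences (scope extrusion, exchange of "\nu", and "\nu_r\ 0 == 0") follow from the "ex x. ! (rt x at r) tens ..." shape, using freshness of the existential witness. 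For the definition-unfolding axiom I use the environment entry "!! \fall{\vec x} X_n\,\vec x o-o proc{P\,\vec x}": since "!!" abbreviates "all u. (A at u)" and "o-o" packages both implications, the entry is available at world "rid" in both directions and yields exactly "X_n\,\vec x ==> proc{P\,\vec x}" and its converse. The congruence-closure rules are uniform, because "proc" and "sum" are homomorphic on the constructors and mutual provability is a congruence for every connective, a direct consequence of \thmref{cut}.

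For the backward direction (mutual provability implies congruence), which is the substantial half, I would reduce an arbitrary focused derivation to a canonical form and read a congruence derivation off it. The key observation is that "proc P" is a positive proposition built only from "tens", "one", "ex" and "!" over two kinds of leaves: the opaque sum-guards "pos (dt -o sum M)" and the constant atoms "X_n\,\vec x". When "proc P @ rid" occurs on the left, every one of these connectives is invertible (\thmref{invertibility}), so the active phase decomposes it completely, leaving in the context precisely a multiset of sum-guards and constants together with the fresh unrestricted rate assumptions contributed by each restriction; dually, proving "proc Q @ rid" under right focus decomposes "proc Q" along the same connectives. The crucial point is the guard "dt": after decomposing "proc P" there is no free "dt" in the linear context, so no sum-guard can ever be fired, and each sum-guard and each constant on the right can only be discharged by matching an identical leaf on the left through the focused identity rules, possibly after unfolding a constant through an "o-o" entry of "env E". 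This matching of multisets of leaves is exactly what the congruence axioms generate: reordering of "tens" leaves corresponds to commutativity and associativity of "par", discharge of "one" to the unit law, the "ex"/"!" layer to the "\nu" congruences, and each constant unfolding to the definitional congruence.

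I expect the main obstacle to be the inversion lemma underlying the backward direction: proving rigorously that a focused derivation of "env E @ rid ; . ; proc P @ rid ==> . ; proc Q @ rid" can do nothing beyond decomposing both encodings to their leaves, matching leaves by identity, and unfolding definitions. The two delicate cases are (i) the interaction with the environment, where a constant may be unfolded arbitrarily often mid-proof, and one must show that these unfoldings can be serialized into a sequence of definitional congruence steps without ever firing a sum-guard; and (ii) the restriction layer, where the fresh channel introduced by "ex"/"!" on one side must be matched, under the freshness and eigenvariable side conditions, to the bound channel on the other, yielding the "\nu" congruences up to $\alpha$-equivalence. Since focusing and invertibility already pin down the gross shape of proofs, the remaining work is principally careful bookkeeping of leaves rather than a new proof-theoretic idea.
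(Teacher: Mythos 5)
Your overall plan is consistent with the paper's, to the extent that a comparison is possible: the paper \emph{omits} the proof of this theorem entirely, remarking only that the encoding is essentially a tensor/additive-conjunction structure so that no distributive law of linear logic can break the process/sum boundary, and (elsewhere) that the guard atom locks the encoded sums. Your two halves --- induction on the congruence derivation for the forward direction, and an inversion argument on focused derivations (full active decomposition of the positive layer, impossibility of firing a guarded sum in the absence of the guard atom, matching of the residual guards and constants, serialization of definitional unfoldings) for the backward direction --- are the natural elaboration of that remark, and the backward half is the same style of canonicity analysis that the paper does carry out explicitly for \lemref{canonical}. One refinement: the ``leaves'' you propose to match are not opaque, since an encoded sum recursively contains encodings of its continuation processes; the matching step must therefore be a mutual induction on processes and sums rather than a bijection between multisets of atomic items. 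That is bookkeeping, not a new idea, and you partly anticipate it.

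There is, however, one concrete step that fails as you describe it: the congruence axiom $\nu_r\,0 \equiv 0$. The encoding of $\nu_r\,0$ is an existential over a channel $x$ of the tensor of the banged satisfaction formula recording the rate of $x$ with the multiplicative unit, while the encoding of $0$ is just the unit. The direction from the unit to the restricted process requires proving, under right focus, that banged formula for some witness channel, i.e.\ establishing the negative rate atom at world $r$ from an empty linear context and an unrestricted context containing only the definitional clauses of the environment; no rule makes this possible. So your claim that the restriction congruences ``follow from the existential/bang shape using freshness'' holds for scope extrusion and exchange of restrictions, but not for garbage collection of a dead restriction. This is arguably a defect of the encoding or of the theorem statement (which, unlike the canonical sequents used later, supplies no rates in the unrestricted zone) rather than of your strategy; still, a complete proof must either treat that axiom specially, work relative to sequents carrying the rates context, or record the asymmetry explicitly.
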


Now we encode the interactions. Because processes were lifted into propositions,
we can be parsimonious with our encoding of interactions by limiting ourselves
to the atomic interactions \set{syn} and \set{int} (below); the \set{par},
\set{res} and \set{cong} interactions will be ambiently implemented by the
logic. Because there are no concurrent interactions---only one interaction can
trigger at a time in a trace---the interaction rules must obey a locking
discipline. We represent this lock as the proposition \cact that is consumed at
the start of an interaction and produced again at the end. This lock also
carries the net rate of the prefix of the trace so far: that is, an interaction
"pr P ->[r] pr Q" will update the lock from "\cact @ s" to "\cact @ {s .
  r}". The encoding of individual atomic interactions must also remove the
\cn{in}, \cn{out} and \cn{tau} tokens introduced in context by the interacting
processes.

\begin{definition}[interaction] \mbox{} \newline
\label{defn:iencoding}
  Let "\cinter \triangleq !! (\cact -o neg \cint with neg \csyn)" where \cact is
  a positive atom and \cint and \csyn are as follows:
  \bgroup\small
  \begin{align*}
    \cint &\triangleq "(dt at rid) tens pos all r. \Bigl((\ctau r at rid) -o rate r neg \cact\Bigr)" \\
    \csyn &\triangleq "(dt tens dt at rid) tens pos all x, r, m. \Bigl((\cout x~m tens \cin x~m at rid) -o pos (rt x at r) -o rate r neg \cact \Bigr)".
   \end{align*}
   \egroup
\end{definition}

\noindent
The number of interactions that are allowed depends on the number of instances of
\cinter in the linear context: each focus on \cinter implements a single
interaction. If we are interested in all finite traces, we will add \cinter to
the unrestricted context so it may be reused as many times as needed.

\subsection{Representational Adequacy.}
\label{sec:spi.adq}

Adequacy consists of two components: completeness and soundness. Completeness is
the property that every \spi execution is obtainable as a \hyll derivation using
this encoding, and is the comparatively simpler direction (see
\thmref{completeness}). Soundness is the reverse property, and is false for
unfocused \hyll as such. However, it \emph{does} hold for focused proofs (see
\thmref{adeq}). In both cases, we reason about the following canonical sequents
of \hyll.

\begin{definition} 
  The \emph{canonical context of} "P", written "can P", is given by:
  \begin{align*} 
    "can {X_n\,x_1 \cdots x_n}" &= "neg {X_n\,x_1 \cdots x_n}" &
    "can {P par Q}" &= "can P", "can Q" &
    "can 0" &= "." &
    "can {\nu_r\ P}" &= "can{P\,a}" \\
    "can M" &= "dt -o sum M" &
  \end{align*}
  For "can{\nu_r\ P}", the right hand side uses a \emph{fresh} channel "a" that
  is not free in the rest of the sequent it occurs in.
\end{definition}

\noindent
As an illustration, take "pr P \triangleq pr{act{oup x(a)} Q par act{inp x} R}". We have:
\begin{gather*} 
  "can P" =
  \begin{array}[t]{l}
    "dt -o neg (\cout x~a tens proc Q)",
    "dt -o all y. neg (\cin x~y tens  proc {R\,y})"
  \end{array}
\end{gather*}
Obviously, the canonical context is what would be emitted to the linear zone at
the end of the active phase if "proc P" were to be present in the left active
zone.

\begin{definition} 
  A neutral sequent is \emph{canonical} iff it has the shape
  \begin{gather*} 
    "env{E}, \cn{rates}, \cinter @ rid ;
     neg \cact @ s, can {P_1 par \cdots par P_k} @ rid ;
     . ==> . ; (proc Q at rid) tens \cact @ t"
  \end{gather*}
  where \cn{rates} contains elements of the form "rt x @ r" defining the
  rate of the channel "x" as "r", and all free channels in "env{E}, can{P_1 par \cdots
    par P_k par Q}" have a single such entry in \cn{rates}.
\end{definition}

\begin{figure*}[tp]
  \centering
  \small
  \begin{multline*}
    \text{Suppose "\LL = \crt x @ r, \cinter @ rid" and "\RR = (proc S
      at rid) tens \cact @ t". (All judgements "{} @ rid" omitted.)}
    \\[1ex]
    \hspace{-1em}
    \infer={"\LL ; neg \cact @ s, can{act{oup x(a)} Q par act{inp x} R} ; . ==> . ; \RR"}{
    \I[1]{"\LL ;
              neg \cact @ s,
              dt -o neg (\cout x\ a tens proc Q),
              dt -o all y. neg (\cin x~y tens proc {R\,y})
              ; . ==> . ; \RR"}
         {\I[2]{"\LL ;
                     neg \cact @ s,
                     dt -o neg (\cout x\ a tens proc Q),
                     dt -o all y. neg (\cin x~y tens proc {R\,y})
                     ; foc{\cinter} ==> \RR"
               }
               {\I[3]{\LL\ ;\
                      \begin{array}[t]{l}
                        "dt -o neg (\cout x\ a tens proc Q)",
                        "dt -o all y. (\cin x~y tens proc {R\,y})", \\
                        "neg dt, neg dt, all x, r, m. ((\cout x~m tens \cin x~m at rid) -o pos(\crt x at r) -o rate {r} \cact) @ s"
                      \end{array}
                      \hspace{-1em}
                      \begin{array}[t]{l}
                        {} \\ "{} ; . ==> . ; \RR"
                      \end{array}
                     }
                     {\I[4]{\LL\ ; \
                            \begin{array}[t]{l}
                              "neg \cout x\ a", "can Q",
                              "dt -o all y. neg (\cin x~y tens proc {R\,y})", \\
                              "neg dt, all x, r, m. ((\cout x~m tens \cin x~m at rid) -o pos(\crt x at r) -o rate {r} \cact) @ s"
                            \end{array}
                            \hspace{-1em}
                            \begin{array}[t]{l}
                              {} \\ "{} ; . ==> . ; \RR"
                            \end{array}
                           }
                           {\I[5]{\LL\ ;\
                                  \begin{array}[t]{l}
                                    "can Q", "neg \cout x~a", "neg \cin x~a", "can{R\,a}", \\
                                    "all x, r, m. ((\cout x~m tens \cin x~m at rid) -o pos(\crt x at r) -o rate {r} \cact) @ s"
                                  \end{array}
                                  \hspace{-1em}
                                  \begin{array}[t]{l}
                                    {} \\ "{} ; . ==> . ; \RR"
                                  \end{array}
                                 }
                                 {"\LL ; can{Q}, can{R\,a}, neg \cact @ {s . r} ; . ==> . ; \RR"}
                           }
                     }
               }
         }
       }\\[-2em]
  \end{multline*}
  \begin{tabular}{l@{:\ }l@{\SP}l@{:\ }l@{\SP}l@{:\ }l}
    \multicolumn{2}{l}{Steps}\\\hline
    1 & focus on "\cinter \in \LL" &
    3 & "dt" for output + full phases &
    5 & cleanup \\
    2 & select \csyn from \cinter, active rules &
    4 & "dt" for input + full phases
  \end{tabular}
  \caption{Example interaction in the \spi-encoding.}
  \label{fig:example-syn}
\end{figure*}

\figref[Figure]{example-syn} contains an example of a derivation for a canonical
sequent involving "pr P". Focusing on any (encoding of a) sum in "can P @ rid"
will fail because there is no "dt" in the context, so only \cinter can be given
focus; this will consume the \cact and release two copies of "(dt at rid)" and
the continuation into the context. Focusing on the latter will fail now (because
"\cout x~m" and "\cin x~m" (for some "m") are not yet available), so the only
applicable foci are the two sums that can now be ``unlocked'' using the
"dt"s. The output and input can be unlocked in an irrelevant order, producing
two tokens "\cin x~a" and "\cout x~a". Note in particular that the witness "a"
was chosen for the universal quantifier in the encoding of "pr{act{inp{x}}Q}"
because the subsequent consumption of these two tokens requires the messages to
be identical. (Any other choice will not lead to a successful proof.)  After
both tokens are consumed, we get the final form "\cact @ {s . r}", where "r" is
the inherent rate of "x" (found from the \cn{rates} component of the
unrestricted zone). This sequent is canonical and contains "can{Q par R\,a}".

Our encoding therefore represents every \spi action in terms of ``micro''
actions in the following rigid order: one micro action to determine what kind of
action (internal or synchronization), one micro action per sum to select the
term(s) that will interact, and finally one micro action to establish the
contract of the action. Thus we see that focusing is crucial to maintain the
semantic interpretation of (neutral) sequents. In an unfocused calculus, several
of these steps could have partial overlaps, making such a semantic
interpretation inordinately complicated. We do not know of any encoding of the
$\pi$ calculus that can provide such interpretations in unfocused sequents
without changing the underlying logic. In CLF~\cite{cervesato03tr} the logic is
extended with explicit monadic staging, and this enables a form of
adequacy~\cite{cervesato03tr}; however, the encoding is considerably more
complex because processes and sums cannot be fully lifted and must instead be
specified in terms of a lifting computation. Adequacy is then obtained via a
permutative equivalence over the lifting operation. Other encodings of $\pi$
calculi in linear logic, such as~\cite{garg05concur} and~\cite{baelde05stage},
concentrate on the easier asynchronous fragment and lack adequacy proofs anyhow.

\begin{theorem}[completeness] \label{thm:completeness} 
  If "pr E |- pr P ->[r] pr Q", then the following canonical sequent is derivable.
  \begin{gather*}
    "env{E}, \cn{rates}, \cinter @ rid ; neg \cact @ s, can P @ rid ; . ==> . ; (proc Q at rid) tens \cact @ {s . r}".
  \end{gather*}
\end{theorem}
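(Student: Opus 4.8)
The plan is to proceed by induction on the derivation of the S$\pi$ interaction judgement "pr E |- pr P ->[r] pr Q", following the five inference rules of \figref{spi} (SYN, INT, PAR, RES, CONG). The two base cases are the atomic interactions SYN and INT, and the remaining three are the inductive cases. The overall strategy is to exhibit, for each rule, an explicit focused derivation of the target canonical sequent by focusing on the appropriate summand of \cinter, exactly as laid out in the worked example of \figref[Figure]{example-syn}.

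\textbf{Base cases.} For the INT rule, where "pr{act{\tau_r} P} ->[r] pr P", I would build the derivation by focusing on "\cinter @ rid" (available in the unrestricted zone), selecting the \cint conjunct via "with L_i". This consumes "neg \cact @ s", releases a single "dt @ rid" to unlock the sum "dt -o sum{act{\tau_r}P}", which fires the internal action and emits "\ctau r @ rid"; focusing on the "all r. ((\ctau r at rid) -o rate r neg \cact)" component then instantiates the quantifier with "r", consumes the token, and---crucially via the derived "\rho R" rule of \defnref{connectives}---updates the lock to "\cact @ {s . r}". The SYN case is the one fully displayed in \figref{example-syn}: focus on \cinter, select \csyn, unlock both the output and input sums (in irrelevant order) to produce "\cout x~a" and "\cin x~a", instantiate the witness "a" for the universal in the input encoding so the tokens match, consume them together with "rt x @ r" from \cn{rates}, and again apply "\rho R" to advance the lock by "r = crate(x)".

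\textbf{Inductive cases.} These are where the phrase ``ambiently implemented by the logic'' from the interaction-encoding discussion is cashed out. For PAR, where "pr{P par Q} ->[r] pr{P' par Q}" follows from "pr P ->[r] pr{P'}", the point is that "can{P par Q} = can P, can Q" simply juxtaposes contexts, so the focused derivation obtained from the induction hypothesis on the "can P" portion goes through unchanged with the inert "can Q @ rid" carried along as spectator linear hypotheses; none of the focusing rules touch it. For RES, the fresh-channel side condition in the definition of "can{\nu_r\ P} = can{P\,a}" lines up with the "\forall x_s" premise of the rule, and I would invoke the induction hypothesis at the fresh channel "a", adding the corresponding "rt a @ r" entry to \cn{rates}. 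For CONG, I appeal directly to the congruence theorem (\thmref{congr}): since "pr P == pr{P'}" and "pr Q == pr{Q'}" give interprovability of the corresponding "proc{\cdot}" encodings, and canonical contexts are built from these, I can pre- and post-compose the derivation supplied by the induction hypothesis using cut (\thmref{cut}) with the congruence derivations.

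\textbf{Main obstacle.} I expect the routine cases to be genuinely routine, and the real care to be needed in two places. First, in the SYN base case, verifying that the \emph{only} successful proof forces the quantifier witnesses "x, r, m" to coincide with the actual channel, its inherent rate, and the transmitted message---i.e. that the locking discipline enforced by "dt" and the shape of \csyn leaves no spurious derivations---requires tracking the active phase precisely. Second, and most delicate, is the CONG case: the congruence theorem gives interprovability of the \emph{process} encodings "proc P \Leftrightarrow proc{P'}", but the canonical sequent is phrased in terms of \emph{canonical contexts} "can{\cdot}", so I must confirm that congruence lifts through the "can" translation and that splicing in the congruence derivations via cut preserves canonicity of the resulting neutral sequent. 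This is the step most likely to require an auxiliary observation rather than a direct appeal to the stated lemmas.
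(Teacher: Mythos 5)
Your proposal matches the paper's proof, which is exactly a structural induction on the derivation of "pr E |- pr P ->[r] pr Q" in which each \spi{} interaction rule is shown to be implementable as an admissible (derived, focused) inference rule on canonical sequents, with the \set{CONG} case discharged by appeal to \thmref{congr}. The paper states this in three lines and leaves all the case analysis implicit, so your fleshed-out version (including the observation that the congruence theorem must be transported through the active-phase decomposition from "proc P" to "can P") is a faithful and somewhat more careful elaboration of the same argument.
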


\begin{proof}
  By structural induction of the derivation of "pr E |- pr P ->[r] pr Q". Every
  interaction rule of \spi is implementable as an admissible inference rule for
  canonical sequents. For \set{cong}, we appeal to \thmref{congr}.
\end{proof}

Completeness is a testament to the expressivity of the logic -- all executions
of \spi are also expressible in \hyll. However, we also require the opposite
(soundness) direction: that every canonical sequent encodes a possible \spi
trace. The proof hinges on the following canonicity lemma.

\begin{lemma}[canonical derivations] \label{lem:canonical} 
  In a derivation for a canonical sequent, the derived inference rules for
  \cinter are of one of the two following forms (conclusions and premises
  canonical).
  \begin{gather*}
    \I{"env{E}, \cn{rates}, \cinter @ rid ; neg \cact @ s, can P @ rid ; . ==> . ; (proc P at rid) tens \cact @ s"}{}
    \\[1ex]
    \I{"env{E}, \cn{rates}, \cinter @ rid ; neg \cact @ s, can P @ rid ; . ==> . ; (proc R at rid) tens \cact @ t"}
      {"env{E}, \cn{rates}, \cinter @ rid ; neg \cact @ {s . r}, can Q @ rid ; . ==> . ; (proc R at rid) tens \cact @ t"}
  \end{gather*}
where: either "pr E |- pr P == pr Q" with "r = rid" or "pr E |- pr P ->[r] pr Q".
\end{lemma}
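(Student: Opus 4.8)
The plan is to prove the lemma by a direct analysis of the phases that can extend a canonical neutral sequent, leaning on the rigidity of the focused calculus in \figref{foc-rules}. First I would enumerate the admissible decide steps (lf, cplf, rf) from a conclusion "env{E}, \cn{rates}, \cinter @ rid ; neg \cact @ s, can P @ rid ; . ==> . ; (proc R at rid) tens \cact @ t" and show that most are either impossible or immediately classified. Left focus on "neg \cact @ s" is blocked by the side condition of lf, since it has the shape "neg p\ \vec t"; left focus on a summand "dt -o sum M @ rid" of "can P" requires a proof of "dt @ rid", which fails because a canonical context carries no free "dt"; and copy-focus on a rate entry "rt x @ r" can only close against the corresponding shifted atom on the right, which the composite goal "(proc R at rid) tens \cact @ t" is not. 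This leaves exactly three productive decisions: right focus on the goal, copy-focus on a definition in "env{E}", and copy-focus on "\cinter".

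Second, I would dispose of the two non-interaction decisions. Right focus on "(proc R at rid) tens \cact @ t" splits by "tens R", "at RF" and ri, and closes the branch exactly when "can P @ rid ==> proc R @ rid" with "s = t"; by the remark following the definition of the canonical context together with \thmref{congr}, this amounts to "pr P == pr R", i.e. the leaf Form 1 (displayed with the representative "R = P"). Copy-focus on a definition "!! \fall{x_1, ..., x_n} X_n\,x_1 \cdots x_n o-o proc{P\,x_1 \cdots x_n}" unfolds through "\forall L", "{at}LF", "with L_i" and "-o L" to rewrite a defined constant to its body or back; this is a pure definitional congruence, yielding Form 2 with "pr P == pr Q" and "r = rid", the rate being unchanged because "rid" is the monoid unit.

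Third, and centrally, I would trace the phase begun by copy-focus on "\cinter = !! (\cact -o neg \cint with neg \csyn)". Decomposing "!!" via "\forall L" and "{at}LF" forces a world instantiation, and the resulting "-o L" spawns the obligation "\cact @ v" that only ri against "neg \cact @ s" can discharge, pinning the world to "s". A "with L_i" then commits to "neg \cint" or "neg \csyn". In the "\cint" branch, "neg L" releases "\cint @ s"; the active rules "tens L", "{at}LA" and lp deposit one "neg dt" and move the continuation "all r. ((\ctau r at rid) -o rate r neg \cact) @ s" to the linear zone. The single "dt" unlocks exactly one summand "dt -o neg (\ctau r tens proc P')", whose firing emits a "\ctau" token that the refocused continuation consumes, instantiating its bound rate to match and, by unfolding the delay "rate r neg \cact" under focus (via "{dn}LF" and "{at}LF"), relocating the lock to "neg \cact @ {s . r}". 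The "\csyn" branch is parallel but releases two "dt", unlocks one output and one input summand, and its continuation instantiates the bound "x, r, m" so as to consume a matching "\cout x~m tens \cin x~m" and to demand "pos (rt x at r)" from "\cn{rates}". In each branch the surviving linear context is, after cleanup, "can Q @ rid, neg \cact @ {s . r}" for the process "Q" produced by the corresponding \set{INT} or \set{SYN} rule, so the premise is again canonical and "pr P ->[r] pr Q": this is Form 2.

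The hard part will be the "\csyn" branch. I must show that the refocused continuation can close only by synchronising two compatible summands, that is, that the demand to consume "\cout x~m tens \cin x~m" forces the witnesses for "x, r, m" so that the channel and the message agree between the output and the input halves, exactly reproducing the side condition of \set{SYN}; this relies on the higher-order abstract syntax reading of "\cin x" to align the transmitted value, and on the "pos (rt x at r)" obligation to fix "r" to the inherent rate of "x" recorded in "\cn{rates}". I also have to verify that the residual context is literally "can Q" and not merely logically equivalent to it, which leans on the "tens / with" shape of the encoding (so that no distributive law can disturb the process/sum structure, as noted after \thmref{congr}); where the two interacting summands must first be brought into position, the necessary rearrangements are precisely the definitional and congruence steps already accounted for in the second case, so that Form 1 and Form 2 together are closed under the construction of the derivation.
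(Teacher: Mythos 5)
Your proposal is correct and follows essentially the same route as the paper's own argument: both proceed by enumerating the admissible decide steps from a canonical neutral sequent (right focus on the goal, which succeeds only when $P$ is congruent to $R$; copy-focus on a definitional clause, which (un)folds a constant without advancing the world; and copy-focus on the interaction clause, which consumes the lock and re-emits it at the advanced world) and by tracing the resulting focused phases exactly as in the worked example of \figref{example-syn}. Your write-up is substantially more detailed than the paper's sketch, and the residual obligations you flag at the end (forcing the synchronisation witnesses in the \csyn branch, and recovering the canonical context literally rather than up to provable equivalence) are precisely the parts the paper leaves implicit in its appeal to induction on the structure of $P$.
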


\begin{proof}
  This is a formal statement of the phenomenon observed earlier in the example
  (\figref{example-syn}): "proc R tens \cact" cannot be focused on the right
  unless "pr P == pr R", in which case the derivation ends with no more foci on
  \cinter. If not, the only elements available for focus are \cinter and one of
  the congruence rules "env{E}" in the unrestricted context.
  In the former case,
  the definition of a top level "X_n" in "can P" is (un)folded (without advancing
  the world).
  In the latter case,
  the derived rule consumes the "neg \cact @ s", and by the time \cact is
  produced again, its world has advanced to "s . r".
  The proof proceeds by induction on the structure of $P$.
\end{proof}

\lemref[Lemma]{canonical} is a strong statement about \hyll derivations using
this encoding: every partial derivation using the derived inference rules
represents a prefix of an \spi trace. This is sometimes referred to as
\emph{full adequacy}, to distinguish it from adequacy proofs that require
complete derivations~\cite{nigam08ijcar}. The structure of focused derivations
is crucial because it allows us to close branches early (using init). It is
impossible to perform a similar analysis on unfocused proofs for this encoding;
both the encoding and the framework will need further features to implement a
form of staging~\cite[Chapter 3]{cervesato03tr}.

\begin{corollary}
  \label{thm:adeq} 
  If "env{E}, \cn{rates}, \cinter @ rid ; neg \cact @ rid, can P @ rid
  ; .  ==> . ; (proc Q at rid) tens \cact @ r" is derivable, then "pr E |- pr
  P ->[r]{\!}^* pr Q".
\end{corollary}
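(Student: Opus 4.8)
The plan is to read the given focused derivation as an iterated application of \lemref{canonical} and to harvest from it an \spi trace from "pr P" to "pr Q" whose accumulated rate is exactly "r". To make the induction go through cleanly it is best to prove a slightly more general claim, by induction on the height of the focused derivation: whenever the canonical sequent "env{E}, \cn{rates}, \cinter @ rid ; neg \cact @ s, can P @ rid ; . ==> . ; (proc Q at rid) tens \cact @ t" is derivable, there is a world "r'" with "s . r' = t" and "pr E |- pr P ->[r']{\!}^* pr Q". The corollary is then the instance "s = rid", for which "t = r" and "r' = r".

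First I would invoke \lemref{canonical} to classify the outermost derived rule for \cinter applied to the canonical conclusion; the lemma guarantees exactly two possibilities. In the base case the rule is the premise-free form, which closes the derivation. By \thmref{congr} the right focus on "(proc Q at rid) tens \cact" can close only when the current process is congruent to the target, so this forces "pr P == pr Q" and leaves the lock world fixed, whence "t = s". I then take "r' = rid" and use the zero-interaction trace, reading the multi-step relation "->{}^*" modulo structural congruence (consistently with \thmref{congr}), so that "pr P == pr Q" already witnesses "pr E |- pr P ->[rid]{\!}^* pr Q".

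In the inductive case the rule is the second form of \lemref{canonical}: it rewrites the current process from "pr P" to some "pr Q'" and advances the lock from "\cact @ s" to "\cact @ {s . r_1}", where either "pr P == pr Q'" with "r_1 = rid", or "pr P ->[r_1] pr Q'". Its unique premise is again a canonical sequent, of strictly smaller height, with linear context "neg \cact @ {s . r_1}, can Q' @ rid" and an unchanged right-hand side, so the induction hypothesis supplies a world "r_2" with "(s . r_1) . r_2 = t" and "pr E |- pr Q' ->[r_2]{\!}^* pr Q". Setting "r' = r_1 . r_2" gives "s . r' = t" by associativity of concatenation, and prepending the first step yields the desired "pr E |- pr P ->[r']{\!}^* pr Q": when "pr P ->[r_1] pr Q'" this is a genuine leading interaction, and when the first step is merely a congruence ("r_1 = rid") it is absorbed into the remainder of the trace through \set{CONG}.

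I expect the main obstacle to be the rate bookkeeping rather than the logic. The essential fact---that the lock \cact faithfully records the concatenation of the per-step rates, each interaction appending its rate and each congruence step appending "rid = nil"---is already delivered by \lemref{canonical}, so no rate is lost or duplicated and the harvested "r'" coincides with "r" in the corollary's instance. What remains is the purely presentational task of folding the congruence-only steps (those with "r_1 = rid") into their neighbouring interactions via \set{CONG}, so that the extracted object is a bona fide multi-step trace rather than an interleaving of transitions and congruences; given the congruence rules of \figref{spi} this is routine.
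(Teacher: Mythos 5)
Your proof is correct and follows essentially the same route as the paper: the paper's proof of this corollary is simply ``Directly from Lemma~\ref{lem:canonical}'', and your induction on the focused derivation, classifying each derived rule for \cinter via that lemma and accumulating the rates by monoid associativity (with congruence-only steps absorbed via \set{CONG}), is exactly the intended unfolding of that citation. The generalisation to an arbitrary starting world $s$ is the right way to make the induction go through.
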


\begin{proof}
  Directly from \lemref{canonical}.
\end{proof}

\subsection{Stochastic Correctness with respect to simulation}
\label{sec:simul}

So far the \hyllr encoding of \spi represents any \spi trace symbolically.
However, not every symbolic trace of an \spi process can be produced according to the
operational semantics of \spi, which is traditionally given by a simulator.
This is the main
difference between \hyll (and \spi) and the approach of CSL~\cite{aziz00tcl},
where the truth of a proposition is evaluated against a CTMC, which is why
equivalence in CSL is identical to CTMC bisimulation~\cite{desharmais03jlap}. In
this section we sketch how the execution could be used directly on the canonical
sequents to produce only correct traces (proofs). The proposal in this section
should be seen by analogy to the execution model of \spi simulators such as
SPiM~\cite{phillips04cmmb}, although we do not use the Gillespie algorithm.

The main problem of simulation is determining which of several competing enabled
actions in a canonical sequent to select as the ``next'' action from the
\emph{race condition} of the actions enabled in the sequent. Because of the
focusing restriction, these enabled actions are easy to compute. Each element of
"can P" is of the form "dt -o sum M", so the enabled actions in that element are
given precisely by the topmost occurrences of "neg" in "sum M". Because none of
the sums can have any restricted channels (they have all been removed in the
active decomposition of the process earlier), the rates of all the channels will
be found in the "\cn{rates}" component of the canonical sequent.

The effective rate of a channel "x" is related to its inherent rate by scaling
by a factor proportional to the \emph{activity} on the channel, as defined
in~\cite{phillips04cmmb}. Note that this definition is on the \emph{rate
  constants} of exponential distributions, not the rates themselves. The
distribution of the minimum of a list of random variables with exponential
distribution is itself an exponential distribution whose rate constant is the
sum of those of the individual variables. Each individual transition on a
channel is then weighted by the contribution of its rate to this sum. The choice
of the transition to select is just the ordinary logical non-determinism. Note
that the rounds of the algorithm do not have an associated \emph{delay} element
as in~\cite{phillips04cmmb}; instead, we compute (symbolically) a distribution
over the delays of a sequence of actions.

Because stochastic correctness is not necessary for the main adequacy result in
the previous subsection, we leave the details of simulation to future work.

\section{Related Work}
\label{sec:related}

Logically, the \hyll sequent calculus is a variant of labelled
deduction~\cite{simpson94phd}, a very broad topic not elaborated on here. The
combination of linear logic with labelled deduction isn't new to this work. In
the $\eta$-logic~\cite{deyoung08csf} the constraint domain is intervals of time,
and the rules of the logic generate constraint inequalities as a side-effect;
however its sole aim is the representation of proof-carrying authentication, and
it does not deal with the full generality of constraint domains or with
focusing.
The main feature of $\eta$ not in \hyll is a separate constraint context that
gives new constrained propositions. \hyll is also related to the Hybrid Logical
Framework (HLF)~\cite{reed06hylo} which captures linear logic itself as a
labelled form of intuitionistic logic. Encoding constrained $\pi$ calculi
directly in HLF would be an interesting exercise: we would combine the encoding
of linear logic with the constraints of the process calculus. Because HLF is a
very weak logic with a proof theory based on natural deduction, it is not clear
whether (and in what forms) an adequacy result in \hyll can be transferred to
HLF.

Temporal logics such as CSL and PCTL~\cite{hansson94fac} are popular for logical
reasoning on temporal properties of transition systems with probabilities. In
such logics, truth is defined in terms of correctness with respect to a
constrained forcing relation on the constraint algebra. In CSL and PCTL states
are formal entities (names) labeled with atomic propositions. Formulae are
interpreted on algebraic structures that are discrete (in PCTL) or continuous
(in CSL) time Markov chains. Transitions between states are viewed as pairs of
states labeled with a probability (the probability of the transition), which is
defined as a function from $S \times S$ into $[0,1]$, where $S$ is the set of
states. While such logics have been very successful in practice with efficient
tools, the proof theory of these logics is very complex. Indeed, such modal
logics generally cannot be formulated in the sequent calculus, and therefore
lack cut-elimination and focusing. In contrast, \hyll has a very traditional
proof theoretic pedigree, but lacks such a close correspondence between logical
and algebraic equivalence. Probably the most well known and relevant stochastic
formalism not already discussed is that of stochastic
Petri-nets~\cite{marsan95book}, which have a number of sophisticated model
checking tools, including the PRISM framework~\cite{kwiatkowska04sttt}. Recent
advances in proof theory suggest that the benefits of model checking can be
obtained without sacrificing proofs and proof search~\cite{baelde07cade}.

\section{Conclusion and Future Work}
\label{sec:concl}

We have presented \hyll, a hybrid extension of intuitionistic linear logic with
a simple notion of situated truth, a traditional sequent calculus with
cut-elimination and focusing, and a modular and instantiable constraint system
(set of worlds) that can be directly manipulated using hybrid connectives. We
have proposed two instances of \hyll (i.e two particular instances of the set of
worlds): one modelling temporal constraints and the others modelling stochastic
(continuous time Markov processes) constraints. We have shown how to obtain
representationally adequate encodings of constrained transition systems, such as
the synchronous stochastic $\pi$-calculus in a suitable instance of \hyll.

Several instantiations of \hyll besides the ones in this paper seem
interesting. For example, we can already use disjunction ("plus") to explain
disjunctive states, but it is also possible to obtain a more extensional
branching by treating the worlds as points in an arbitrary partially-ordered set
instead of a monoid. Another possibility is to consider lists of worlds instead
of individual worlds -- this would allow defining periodic availability of a
resource, such as one being produced by an oscillating process. The most
interesting domain is that of discrete probabilities: here the underlying
semantics is given by discrete time Markov chains instead of CTMCs, which are
often better suited for symbolic evaluation~\cite{wu07qest}.

The logic we have provided so far is a logical framework well suited {\it to
  represent} constrained transition systems. The design of a logical framework
{\it for} (i.e. to reason about) constrained transition systems is left for
future work -and might be envisioned by using a two-levels logical framework
such as the Abella system~\cite{gacek09phd}. The work presented in
\cite{deMaria-Despeyroux-Felty:14rep} provides a first step in this direction in
the area of systems biology (where biological systems are viewed as transition
systems), using the Coq proof assistant \cite{BertotCasteran:2004} with
\chyll[T'] (with discrete instants of time) as an object logic. This work can be
seen as a first possible implementation of HyLL with temporal constraints.

An important open question is whether a general logic such as \hyll can serve as
a framework for specialized logics such as CSL and PCTL. A related question is
what benefit linearity truly provides for such logics -- linearity is obviously
crucial for encoding process calculi that are inherently stateful, but CSL
requires no such notion of single consumption of resources.

In the $\kappa$-calculus, reactions in a biological system are
modeled as reductions on graphs with certain state annotations.
It appears (though this has not been formalized)
that the $\kappa$-calculus can be embedded in \hyll even more naturally than
\spi, because a solution---a multiset of chemical products---is simply a tensor
of all the internal states of the binding sites together with the formed
bonds. One important innovation of $\kappa$ is the ability to extract
semantically meaningful ``stories'' from simulations. We believe that \hyll
provides a natural formal language for such stories.

We became interested in the problem of encoding stochastic reasoning in a
resource aware logic because we were looking for the logical essence of
biochemical reactions. What we envision for the domain of ``biological
computation'' is a resource-aware stochastic or probabilistic $\lambda$-calculus
that has \hyll propositions as (behavioral) types.

\vspace{-1em}

\paragraph*{Acknowledgements}

This work was partially supported by the Information Society Technologies
programme of the European Commission, Future and Emerging Technologies under the
IST-2005-015905 MOBIUS project, and by the European TYPES project.
We thank Fran\c{c}ois Fages, Sylvain Soliman, Alessandra Carbone, Vincent Danos
and Jean Krivine for fruitful discussions on various preliminary versions of the
work presented here. 
Thanks also go to Nicolas Champagnat, Luc Pronzato and Andr{\'e} Hirschowitz
who helped us understand the algebraic nature of stochastic constraints.

\vspace{-1em}

\end{document}